\numberwithin{equation}{section}
\DeclareMathAlphabet{\pazocal}{OMS}{zplm}{m}{n}
\DeclareFontFamily{OT1}{pzc}{}
\DeclareFontShape{OT1}{pzc}{m}{it}{<-> s * [1.1500] pzcmi7t}{}
\DeclareMathAlphabet{\mathpzc}{OT1}{pzc}{m}{it}
\newtheorem{theorem}{Theorem}[section]
\newtheorem{lemma}[theorem]{Lemma}
\newtheorem{proposition}[theorem]{Proposition}
\newtheorem{corollary}[theorem]{Corollary}
\theoremstyle{definition}
\newtheorem{definition}[theorem]{Definition}
\declaretheorem[sibling=theorem,style=definition, qed=\bell]{remark}
\numberwithin{equation}{section}
\newcommand{\abs}[1]{\left\lvert#1\right\rvert}    
\newcommand{\norm}[1]{\left\lVert#1\right\rVert}     
\newcommand{\supp}{{\mathrm{supp} }}
\newcommand{\be}{\begin{equation}}
\newcommand{\ee}{\end{equation}}
\newcommand{\ox}{\otimes}
\newcommand{\deq}{\overset{\cdot}{=}}
\newcommand{\cl}{\mathscr{C}\hspace{-1.5pt}\ell}
\newcommand{\scl}{c\hspace{-0.5pt}\ell}
\newcommand{\ccl}{\mathbb{C}\hspace{-1.5pt}\ell}
\newcommand{\diro}{\mathbf{D}}
\DeclarePairedDelimiter\braket{\langle}{\rangle}
\begin{document} 

\title[M\o ller maps for Dirac fields]{M\o ller maps for Dirac fields in external backgrounds}

\author[V. Abram]{Valentino Abram}
\address{Dipartimento di Matematica, Universit\`a di Trento, 38123 Povo (TN), Italy}
\email{valentino.abram@unitn.it}

\author[R. Brunetti]{Romeo Brunetti}
\address{Dipartimento di Matematica, Universit\`a di Trento, 38123 Povo (TN), Italy}
\email{romeo.brunetti@unitn.it}

\begin{abstract}
In this paper we  study the foundations of the algebraic treatment of classical and quantum field theories for Dirac fermions under external backgrounds following the initial contributions made by several collegues. The treatment is restricted to contractible spacetimes of globally hyperbolic nature in dimensions $d\ge 4$. In particular, we construct the classical M\o ller maps intertwining the configuration spaces of \emph{charged} and \emph{uncharged} fermions. In the last part, as a first step towards a quantization of the theory, we explore the combination of the classical M\o ller maps with Hadamard bidistributions and prove that they are involutive isomorphisms (algebraically and topologically) between suitable (formal) algebras of functionals (observables) over the configuration spaces of charged and uncharged Dirac fields.
\end{abstract}

\maketitle

\tableofcontents

\section{Introduction}
The main aim of this paper is to contribute to a by now well established framework for the algebraic treatment of quantum systems made of fermions in arbitrary backgrounds (metrics, external fields etc.) which aims at the rigorous determination of physical effects (see, \emph{e.g.}, \cite{Dimock1992, Hollands2001b, Fewster2002, Dantoni2006,  Sanders2010, dappiaggi2011, Rejzner2011, Zahn2014, Zahn15, Frob2019, Zahn2019, Murro2021, Brunetti2022}). It is a known fact that especially for strong space-time dependent external fields the mostly used theoretical frameworks suffer from several deficits. In general, the lack of space-time symmetries implies a missing privileged state (vacuum) with the related impossibility to use familiar tools as Fourier transformations and Fock spaces. All these limit terribly the ability of physicists to deduce observable effects in these quantum situations without making further drastic and simplifying assumptions. Indeed, there exists a very large literature on the subject which is full of interesting ideas, techniques and results (see, \emph{e.g.}, \cite{Eides2001,Fetodov2022}). Tipically, however, the proposals are made without referring to general and deep basic concepts and hence with a lot of \emph{ad hoc} assumptions. A possible way out of these difficulties is to refer to recent structural advances in quantum field theories using the algebraic approach. The new perspective refers to deep conceptual advancements, for instance local covariance \cite{Brunetti2003}, and technically, instead of Fourier transformations, uses its modern improvement named micro-local analysis \cite{Hormander1998}, in particular wave front sets \cite{Strohmaier2009,Brouder_2014}. There have been several recent interesting results that point towards the validity of this claim. For instance, Fr\"ob and Zahn \cite{Frob2019} have shown how to rigorously derive the trace anomaly for chiral fermions using in particular the rigorous method of Hadamard subtraction. At variance w.r.t. the literature in physics, this was done in Lorentzian spacetime, and invoking physical principles as invariance of the stress-energy tensor to show the cancellation of unwanted terms on which physicists debated for long.

Our main concern is to build up at first the classical tools that can be used later to develop the formalism towards the quantum aspects. In the present paper we concentrate basically only in the former aspects but make an initial step into the latter. We demonstrate that the classical M\o ller maps are involutive isomorphisms for the various classes of mathematical objects of pertinence for us. Indeed, at first we introduce various spinorial configurations spaces as sections of diverse bundles over semi-Riemannian spaces, and prove that the M\o ller maps are isomorphisms between the charged and uncharged spinor bundles. Thence, we extend the structure to (nonlinear) \emph{functionals} over such bundles forming (involutive) algebras. One should look at them as the (abelian) algebras of observables of the theory. 
A first step forward is done here by the construction of a \emph{Poisson} algebra. This entails at first the selection of a ``good'' subset of functionals.  The term good refers to the fact that in order to rigorously define a Poisson structure for such field theories one can make a covariant choice which is determined by the use of the Peierls' backets (see, \emph{e.g.}, \cite{Brunetti2019}). This implies the use of causal Green operators (propagators) whose kernels, seen as distributions, do not directly allow multiplications by generic functionals.  It is here that microlocal analysis appears as the right tool. The good selection is indeed made out of the desire to define the product of functionals, and their derivatives, with the propagators. A sufficient criterion for those products to exist is the H\"ormander's one based on wave front sets. Hence, the so called \emph{microcausal functionals} make their crucial appearance here. These are functionals with prescribed singularities whose wave front sets combine well with the wave front set of the propagators as to satisfy H\"ormander's criterion.

We then develop the formalism doing a first step into the quantum realm by extending the algebras to the formal algebras of deformation quantization, by changing the classical product with the use of the Hadamard prescription. 

In the course of the paper we develop geometric and analytic descriptions by adopting a practical and precise pointwise view which has the merit of being rather explicit. The resulting formalism had high advantages which we do hope compensate the heaviness of notations.

\section{Geometric and analytic preliminaries}
\subsection{Basic notions in spin geometry on pseudo-riemannian manifolds}
We begin the exposition by recalling some definitions and results in order to fix the notation used throughout the paper. We refer the reader to \cite{Tu2017}, \cite{Lawson1990} and \cite{Nicolaescu2020} for more details.

We shall work on $n$-dimensional spacetimes, that is, couples $(M,g)$ consisting of a connected, paracompact, orientable, time-orientable Hausdorff smooth manifold $M$ and a non-degenerate, pseudo-riemannian metric $g$. We shall further suppose that two additional conditions hold:
\begin{enumerate}[$(i)$]
    \item we assume $(M,g)$ is \emph{globally hyperbolic}: this entails that given a normally hyperbolic operator, this admits retarded and advanced Green operators (see, \emph{e.g.}, \cite{Bar2007,Baer2015});
    \item we also assume that $\text{dim}(M)\ge 4$: this entails that there exists a universal covering homomorphism $\xi_0\colon \mathrm{Spin}^0_{r,1} \to \text{SO}_{r,1}^0$ between the identity component of the $\mathrm{Spin}$ group $\mathrm{Spin}_{r,1}$ and the identity component of the signature $(r,1)$ of the special orthogonal group \cite[Proposition 12.1.41]{Nicolaescu2020}.
\end{enumerate}
On the manifold $M$ modelling our spacetime, we shall consider \emph{fiber bundles}, \textit{i.e.}, quadruples $(B, M, \pi, F)$ where $\pi\colon B \to M$ is a smooth, surjective map and such that there exists an open cover $\{U_\alpha\}_{\alpha\in A}$ of the base manifold $M$  and an associated collection $\{\phi_\alpha \colon \pi^{-1}(U_\alpha) \to U_\alpha \times F\}_{\alpha\in A}$ of diffeomorphisms, called \emph{trivialization of the bundle}, such that
\be
\label{Eq:1.1}
\pi_1 \circ \phi_\alpha = \pi|_{\pi^{-1}(U_\alpha)} \ \text{for every} \ \alpha\in A\ .
\ee
Notice that given a point $p\in M$, the fiber $B_p \deq \pi^{-1}(p)$ is diffeomorphic to $F$; this is the reason $F$ is called \emph{typical fiber}. 
We can consider the \emph{transition functions} of the bundle, that is, the maps
\[
\begin{split}
\phi_{\alpha\beta}\colon (U_\alpha\cap U_\beta)\times F &\to (U_\alpha\cap U_\beta)\times F\\
(p, f)\qquad &\mapsto \phi_\alpha\left(\phi_\beta^{-1}(p, f)\right)\ .
\end{split}
\]
By the condition (\ref{Eq:1.1}), we have that $\phi_{\alpha\beta}(p, f) = (p, g_{\alpha\beta}(p)(f))$ for some $g_{\alpha\beta}\colon U_\alpha\cap U_\beta \to \text{Aut}(F)$; the collection $\{g_{\alpha\beta}\}_{\alpha, \beta \in A}$ is called \emph{cocycle}.

In the context of quantum field theory, one usually deals with two kinds of fiber bundles:
\begin{enumerate}[$(i)$]
    \item \emph{vector bundles}, that is, fiber bundles whose typical fiber is a vector space and whose cocycle is such that $g_{\alpha\beta}(p)\in \text{GL}(F)$ for every $\alpha, \beta \in A$, $p\in U_\alpha\cap U_\beta$;
    \item \emph{principal $G$-bundles}, that is, fiber bundles whose typical fiber is the group $G$, whose total space $P_G$ is a right $G$-manifold with right action $r_g\colon P_G \to P_G$ and whose trivialization $\{\phi_\alpha \colon \pi^{-1}(U_\alpha) \to U_\alpha \times G\}$ consists of right $G$-equivariant maps. This entails that the cocycle $\{g_{\alpha\beta}\colon U_\alpha\cap U_\beta \to \text{Aut}(G)\}$ consists of left-translations \cite[Lemma 27.7]{Tu2017}.
\end{enumerate}
Given a principal $G$-bundle $\pi\colon P_G \to M$ and a representation $\rho\colon G \to \text{GL}(V)$ of the group on a finite-dimensional vector space $V$, we can construct the \emph{associated vector bundle} $\pi \colon P_G \times_\rho V \to M$ with typical fiber $V$. \cite[Theorem 31.9]{Tu2017} shows that there exist linear isomorphisms
\be
\label{Eq:1.2}
\begin{split}
&{\cdot}^\flat\colon \Omega_\rho^q(P_G, V)\to \Gamma\left(\wedge^q T^\ast M \otimes P_G\times_\rho V\right) \qquad {\cdot}^\sharp\colon \Gamma\left(\wedge^q T^\ast M \otimes P_G\times_\rho V\right)\to\Omega_\rho^q(P_G, V)
\end{split}
\ee
between the spaces $\Omega_\rho^q(P_G, V)$ of $V$-valued, tensorial $q$-forms of type $\rho$ on $P_G$ and the space of $q$-forms on $M$ with values in $P_G \times_\rho V$.

It is a well-known fact that gauge fields are represented by Ehresmann connections $\omega\in \Omega^1(P, \mathfrak{g})$ on a suitable principal $G$-bundle \cite{Naber2011}. These connection 1-forms can be used to induce an \emph{exterior covariant differentiation} $D\colon \Omega_\rho^q(P_G, V)\to \Omega_\rho^{q+1}(P_G ,V)$,
\be
\label{Eq:1.3}
D\varphi \deq d\varphi + \omega \cdot_\rho \varphi
\ee
where
\[
(\omega \cdot_\rho \varphi)_{p}(v_1, \dots, v_{q+1}) = \frac{1}{q!} \sum_{\sigma\in S_{q+1}}\text{sgn}(\sigma)\rho_\ast(\omega_p(v_{\sigma(1)}))\varphi_p(v_{\sigma(2)}, \dots, v_{\sigma(q+1)})\ .
\]
We can then use the exterior covariant derivative to endow the associated vector bundle $P_G \times_\rho V$ with a connection $\nabla\colon \mathfrak{X}(M)\times \Gamma(P_G\times_\rho V)\to \Gamma(P_G\times_\rho V)$, 
\[
\nabla_v s \deq (D(s^\sharp))^\flat (v)\ .
\]
This section of the associated vector bundle can be expressed locally as
\be
\label{Eq:1.4}
(\nabla_v s)(p) = \left[\sigma(p), v(p) \left( s^\sharp \circ \sigma \right)+\rho_\ast\Big(\omega_{\sigma(p)}\big({\sigma_\ast}_p(v_p)\big)\Big)s^\sharp(\sigma(p))\right]
\ee
where $\sigma\colon U \to {P_G}_U$ is a \emph{local} section of the principal $G$-bundle $P_G$, usually deemed \emph{local gauge choice}. The pullback $\sigma^\ast \omega\in \Omega^1(U, \mathfrak{g})$ is called \emph{local gauge potential}.

An important construction which can be carried out in the case of complex vector bundles is that of the \emph{conjugate bundle}. Let us recall that given a complex vector space $V$, the conjugate vector space $\overline{V}$ is a vector space which is real-isomorphic to $V$ and whose complex structure is the conjugate complex structure. The antilinear isomorphism bewteen $V$ and $\overline{V}$ shall be denoted with $C\colon \overline{V}\to V$, and it an be used to induce, starting from a linear map $L\colon V \to V$, a linear map $\overline{L}\colon \overline{V}\to \overline{V}$ by defining
\[
\overline{L}\deq C^{-1} \circ L \circ C\ .
\]
Given a basis $\{\overline{e}_i\}_{i\in I}$ of $\overline{V}$, the components $\{\overline{L}^i_j\}_{i, j\in I}$ of $\overline{L}$ are the complex conjugates of the components of $L$ in the basis $\{C \overline{e}_i\}_{i\in I}$, i.e.
\[
\overline{L}^i_j = \overline{L^i_j}
\]
If $V$ is endowed with a sesquilinear form, then one can show that $\overline{V}\simeq V^\ast$. These facts can be extended to the case of complex vector bundles in the following way: first of all, consider a vector bundle $E \overset{\pi}{\to} M$ with typical fiber $V$, whose cocycle is given by $\{g_{\alpha\beta}\}_{\alpha, \beta \in A}$; then one can construct another vector bundle with typical fiber $\overline{V}$ and whose cocycle is given by $\{\overline{g_{\alpha \beta}}\}_{\alpha, \beta \in A}$; this vector bundle is the conjugate vector bundle $\overline{E}\overset{\pi}{\to} M$. The conjugation map $C\colon \overline{V}\to V$ can be extended to a vector bundle anti-isomorphism $C\colon \overline{E}\to E$, and if $E$ is endowed with a hermitian metric $h$, then
\begin{enumerate}[(i)]
    \item $h$, which assigns to every point $p\in M$ a sesquilinear form $h_p$ on $E_p$, can be understood as a map assigning to each $p\in M$ a bilinear map $h_p\colon E_p \times \overline{E}_p\to \mathbb{C}$;
    \item we have an isomorphism $\overline{E}\simeq E^\ast$.
\end{enumerate}
Let us now consider the tangent bundle $\pi\colon TM \to M$ on our spacetime; by the assumptions on $(M,g)$ this is an oriented vector bundle endowed with a Lorentzian structure, and we can thus consider its oriented, time-oriented, pseudo-orthonormal frame bundle, that is, a principal $\text{SO}_{r,1}^0$-bundle $\pi_P \colon P_{\text{SO}_{r,1}^0}(M)\to M$ naturally associated to it. If we further assume that the second Stiefel-Whitney class $w_2(TM)$ vanishes, we can consider a \emph{spin structure} on $TM$, that is, a principal $\mathrm{Spin}_{r,1}^0$-bundle $\pi_{P'}\colon P_{\mathrm{Spin}_{r,1}^0}(M)\to M$ coupled with a 2-sheeted covering
\be
\label{Eq:1.5}
\xi\colon P_{\mathrm{Spin}^0_{r,1}}\to P_{\mathrm{SO}_{r,1}^0}
\ee
such that $\xi\left(r_g(p)\right) = r_{\xi_0(g)}\left(\xi(p)\right)$ for every $p\in P_{\mathrm{Spin}_{r,1}^0}$, $g\in \mathrm{Spin}_{r,1}^0$, with $\xi_0\colon \mathrm{Spin}_{r,1}^0\to \text{SO}_{r,1}^0$ being the universal covering map. This entails in particular that the cocycle $\{g_{\alpha\beta}\colon U_\alpha\cap U_\beta\to \text{Aut}(\text{SO}_{r,1}^0)\}$ of the principal $\text{SO}_{r,1}^0$-bundle is given by $\{\xi_0(s_{\alpha\beta})\}$, where $\{s_{\alpha\beta}\colon U_\alpha\cap U_\beta\to \text{Aut}(\mathrm{Spin}_{r,1}^0)\}$ is a cocycle for the principal $\mathrm{Spin}_{r,1}^0$-bundle.\\
By the fundamental theorem of Riemannian geometry, there exists a unique torsion-free, metric-compatible connection $\omega\in \Omega^1(P_{\text{SO}_{r,1}^0}, \mathfrak{so}_{r,1}^0)$, the Levi-Civita connection; using the covering map (\ref{Eq:1.5}), this connection can be pulled back to a connection 1-form $\omega^s\in\Omega^1(P_{\mathrm{Spin}_{r,1}^0}, \mathfrak{spin}_{r,1}^0)$, usually called \emph{spin connection}.

We can then consider the following: depending on the dimension $r+1$ of our spacetime $M$, there exists a complex representation of the Spin group $\Delta_{r,1}^{\mathbb{C}}\colon \mathrm{Spin}_{r,1}^0\to \text{GL}(V, \mathbb{C}) $ which is irreducible in the odd dimensional case and reducible in the even dimensional one \cite[\S I, Proposition 5.15]{Lawson1990}. In both cases, the representation descends from a complex representation $\rho\colon \cl_{r,1}\to \text{End}(V, \mathbb{C})$ of the Clifford algebra $\cl_{r,1}$. \cite[\S I, Theorem 5.7]{Lawson1990} shows that there exists a unique Clifford algebra representation if $r+1$ is even, while there exist two inequivalent representations if $r+1$ is odd; the induced Spin representation in the latter case doesn't depend on the chosen representation. These representations can be used to carry out the following construction:
\begin{enumerate}[$(i)$]
    \item given a complex Spin representation $\Delta_{r,1}^{\mathbb{C}}\colon \mathrm{Spin}_{r,1}^0\to \text{GL}(V, \mathbb{C})$, we can consider the \emph{complex spinor bundle}
    \be
    \label{Eq:1.6}
    S(M)\deq P_{\mathrm{Spin}_{r,1}^0}\times_{\Delta_{r,1}^{\mathbb{C}}} V
    \ee
    \item we can also consider the \emph{Clifford algebra bundle}
    \be
    \label{Eq:1.7}
    \cl(M)\deq P_{\text{SO}_{r,1}^0}\times_{\scl}\cl_{r,1}
    \ee
    where $\scl\colon \text{SO}_{r,1}^0\to \text{Aut}(\cl_{r,1})$ denotes the unique extension to the Clifford algebra $\cl_{r,1}$ of the action of $\text{SO}_{r,1}^0$ on $(\mathbb{R}^{r+1}, q_{r,1})$. It can be shown that the fiber of $\cl(M)$ at $p\in M$ is isomorphic to $\cl(T_p^\ast M, g_p)$.
\end{enumerate}
These two bundles are related by means of the \emph{Clifford module multiplication} 
\[
\mu\colon  \cl(M)\ox S(M)\to S(M)
\]
which explicitly uses the fact that $\Delta_{r,1}^{\mathbb{C}}$ descends from an algebra representation of $\cl_{r,1}$.\\
Thanks to the spin structure, this map can then be extended to yield a Clifford module multiplication
\be
\label{Eq:1.8}
\cdot_{\Gamma}\colon \mathfrak{X}( M)\ox \Gamma(S(M))\to \Gamma(S(M))
\ee
between vector fields and sections of the spinor bundle.

We shall indicate with
\be
\label{Eq:1.9}
\nabla^s\colon \mathfrak{X}(M)\ox \Gamma(S(M))\to \Gamma(S(M))
\ee
the covariant derivative induced as of (\ref{Eq:1.4}) on the spinor bundle by the spin connection $\omega^s$. This connection behaves well with respect to the Clifford multiplcation, in the sense that it is a module derivation: given $s\in \Gamma(S(M))$ and $v, u\in\mathfrak{X}(M)$ we have
\be
\label{Eq:1.10}
\nabla_u^s(v \cdot_\Gamma s) = (\nabla_u v)\cdot_\Gamma s + v \cdot_\Gamma \nabla_u^s s\ .
\ee
The Clifford multiplication can be combined with the covariant derivative $\nabla^s$ and with the musical isomorphism induced by the metric to yield a partial differential operator acting on sections of the spinor bundle, the \emph{Dirac operator}
\be
\label{Eq:1.11}
\begin{split}
\diro\colon \Gamma(S(M))&\to \Gamma(S(M))\\
s\qquad &\mapsto \quad \diro(s)\deq i\left(\cdot_\Gamma(\nabla^s(s))\right)\ .
\end{split}
\ee
Given a local orthonormal frame $\{e_i\}\subset \Gamma(TM)$ and its dual $\{e^i\}\subset \Gamma(T^\ast M)$ and considering the local section $\sigma\colon U_\alpha \to P_{\mathrm{Spin}_{r,1}^0}(M)$, (\ref{Eq:1.11}) can be written locally as
\[
\diro(s)(p) = ie^j(p) \cdot_\Gamma \left[\sigma(p),e_j(p)\left(s^\sharp\circ \sigma\right) + {\Delta_{r,1}^{\mathbb{C}}}_\ast\left(\omega_{\sigma(p)}^s(\sigma_{\ast_p}e_j(p))\right)s^\sharp(\sigma(p))\right]
\]
where we use Einstein's convention. As one can see from the local expression presented above, the Dirac operator is a first-order partial differential operator whose principal symbol $\sigma(\diro)$ is given pointwise by
\[
\sigma(\diro)\left(p, \xi_p\right)(s_p) =  \mu|_p(\xi_p, s_p)\ .
\]
Owing to the properties of Clifford multiplication and of the principal symbol, it is then evident that $\diro^2$ is a normally hyperbolic, second-order partial differential operator. Therefore, $\diro$ admits unique advanced and retarded Green operators 
\be
\label{Eq:1.12}
S_\pm \colon \Gamma_c(S(M))\to \Gamma(S(M))
\ee
i.e. $\diro\circ S_\pm = \text{id}_{\Gamma(S(M))}$, $S_\pm \circ \diro|_{\Gamma_c(S(M))} = \text{id}_{\Gamma_c(S(M))}$ and $\supp(S_\pm(s))\subseteq J^{\mp}(\supp(s))$ \cite{Baer2015}. We remind, for later purposes, that we can also define the so-called \emph{causal Green operator} (or, more simply, propagator) by posing $S\doteq S_- - S_+$.\footnote{We hope the reader shall not feel disheartened by the frequent abuse of notations, especially here for the symbols of the Green operators. The context hopefully makes clear which is which.}

By \cite[Proposition 12.1.65]{Nicolaescu2020}, we know that $S(M)$ admits a Hermitian metric such that the Clifford multiplication by $\alpha\in\Gamma(T^\ast M)$ is an Hermitian automorphism of $S(M)$. We shall denote the Hermitian metric by $(\cdot, \cdot)$, and the associated two form by $h$.
\subsection{The geometric description of charged spinors}
Following \cite{Zahn2014}, the coupling of a Dirac field with an external gauge field exploits the notion of \emph{direct product bundle}, which we recall here:
\begin{definition}
\label{Def:1.1}
Let $\pi_P\colon P \to M $ and $\pi_Q \colon Q \to M$ be two principal $G$- and $H$-bundles respectively. The product space $P\times Q$ is a principal $G\times H$-bundle over $M\times M$; by considering the diagonal $i\colon\Delta{\hookrightarrow}M\times M$ we can then consider the pullback bundle
\[
P+Q \deq i^\ast(P\times Q) = \left\{(x, p,q)\in \Delta\times(P \times Q) \ \text{s.t.} \ (x,x)=(\pi_P(p),\pi_Q(q))\right\}\ .
\]
As $\Delta$ is diffeomorphic to $M$, we have that $P+Q$ is a principal $G\times H$-bundle over $M$, called the direct product bundle.
\end{definition}
Notice that $P+Q$ can be naturally viewed as a subset of $P\times Q$; we can consider the restriction of the natural projections
\[
f_P \colon P \times Q \to P \qquad f_Q \colon P \times Q \to Q
\]
to $P+Q$. These are principal bundle homomorphisms, meaning that
\[
f_P\left(r_{(g,h)}(p,q)\right) = r_g p \qquad f_Q\left(r_{(g,h)}(p,q)\right) = r_h q \ .
\]
If the two principal bundles in Definition \ref{Def:1.1} are endowed with the connections $\omega^P \in \Omega^1(P, \mathfrak{p})$ and $\omega^Q\in\Omega^1(Q, \mathfrak{q})$ respectively, then by \cite[Proposition 6.3]{kobayashi1996foundations} we have that the 1-form
\be
\label{Eq:1.15}
\omega \deq f_P^\ast \omega^P \oplus f_Q^\ast \omega^Q
\ee
is an Ehresmann connection on $P+Q$.\\
Let us now consider two representations 
\[
\rho_1\colon G \to \text{GL}(V) \qquad \rho_2\colon H \to \text{GL}(V)
\]
such that $\rho_1(g)\rho_2(h) = \rho_2(h)\rho_1(g)$ for every $g\in G$, $h\in H$. Then one can construct the representation
\be
\label{Eq:1.16}
\begin{split}
\rho\colon G\times H &\to \text{GL}(V)\\
(g,h)&\mapsto \rho\left(g,h\right)\deq \rho_1(g)\rho_2(h)
\end{split}
\ee
whose adjoint representation is given by
\[
\begin{split}
\rho_\ast\colon \mathfrak{g}\oplus\mathfrak{h}&\to \qquad \text{End}(V)\\
(\mathbf{g}, \mathbf{h})&\mapsto {\rho_1}_\ast(\mathbf{g})+{\rho_2}_\ast(\mathbf{h})
\end{split}
\]
and consider the associated bundle
\[
(P+Q)\times_\rho V
\]
which will then be endowed with a covariant derivative.

In our case of interest we will deal with a principal $\mathrm{Spin}_{r,1}^0$-bundle $\pi_s\colon P_{\mathrm{Spin}_{r,1}^0}\to M$ coupled with a principal $G$-bundle $\pi_g\colon P_G \to M$, where $G$ is a compact Lie group which admits a representation $\rho_G\colon G\to \text{GL}(V)$ such that
\[
\Delta_{r,1}^{\mathbb{C}}(s)\rho_G(g) = \rho_G(g) \Delta_{r,1}^{\mathbb{C}}(s) \ \text{for every} \ s\in\mathrm{Spin}_{r,1}^0, \ g\in G\ .
\]
By considering the direct product bundle $P_{\mathrm{Spin}_{r,1}^0} + P_G$ endowed with the connection given by (\ref{Eq:1.15}) as well as the representation given by (\ref{Eq:1.16}) we then can construct the associated vector bundle
\[
S_G(M) \deq (P_{\mathrm{Spin}_{r,1}^0} + P_G) \times_\rho V
\]
which we shall call \emph{charged spinor bundle}. This can be endowed with the covariant derivative
\[
\nabla^{s,G}\colon \mathfrak{X}(M)\ox \Gamma(S_G(M))\to \Gamma(S_G(M)))
\]
locally given by
\[
\begin{split}
&(\nabla^{s,G}_v s)(p) = \left[\sigma(p), v(p)\left(s^{\sharp_G} \circ \sigma\right) + \rho_\ast\Big(\omega_{\sigma(p)}({\sigma_\ast}_p v(p))\Big)s^{\sharp_G}\left(\sigma(p)\right)\right] \\
& = \left[\sigma(p), v(p)\left(s^{\sharp_G} \circ \sigma\right) + \left({\Delta_{r,1}^{\mathbb{C}}}_\ast\left((f_{\mathrm{Spin}_{r,1}^0} \circ \sigma)^\ast \omega^s_p (v(p))\right) + {\rho_G}_\ast\Big((f_G \circ \sigma)^\ast \omega^G_p(v(p))\Big)\right)s^{\sharp_G}\left(\sigma(p)\right)\right]
\end{split}
\]
where $\sigma\colon U \to P_{\mathrm{Spin}_{r,1}^0}+P_G$ is a local section of the direct product bundle and
\[
\cdot^{\sharp_G}\colon \Gamma(\wedge^qT^\ast M\otimes S_G(M))\to \Omega^q_\rho(P_{\mathrm{Spin}_{r,1}^0} + P_G, V)
\]
denotes the isomorphism as of (\ref{Eq:1.2}). Notice that as $V$ is a $\cl_{r,1}$-module, even in this case we have a Clifford module multiplication 
\[
\cdot_{\Gamma, G}\colon \mathfrak{X}( M)\ox \Gamma(S_G(M)) \to \Gamma(S_G(M))
\]
which we can use to construct a Dirac operator $\diro^G$ as in (\ref{Eq:1.11}). Notice that the highest-order term is analogous to that of $\diro$, therefore the principal symbols of these two operators coincide: it follows that $\diro^G$ admits unique advanced and retarded Green operators
\be
\label{Eq:1.17}
S^G_\pm \colon \Gamma_c(S_G(M))\to \Gamma(S_G(M))\ ,
\ee
with the analogous properties of the uncharged operators as after \eqref{Eq:1.12}. Again, we can define the \emph{causal Green operator} by posing $S^G\doteq S^G_- - S^G_+$. 
\section{The classical \texorpdfstring{M{\o}ller}{Møller} map on field configurations}
\subsection{The entwining maps between sections}
\label{Sec:2.1}
In order to compare the two theories, we need a way to relate two different spaces of functions: on one hand we have the smooth sections of the spinor bundle $\Gamma(S(M))$, while on the other we have the sections of the charged spinor bundle $\Gamma(S_G(M))$. Notice that these spaces are isomorphic, thanks to the linear isomorphisms in (\ref{Eq:1.2}), to the spaces
\[
\begin{gathered}
F\deq \left\{f\colon P_{\mathrm{Spin}_{r,1}^0} \to V, \ f \ \text{right}-\mathrm{Spin}_{r,1}^0 \ \text{equivariant}\right\} \\
F_G \deq \left\{f\colon P_{\mathrm{Spin}_{r,1}^0} + P_G \to V, \ f \ \text{right}-\mathrm{Spin}_{r,1}^0\times G \ \text{equivariant}\right\}
\end{gathered}
\]
respectively. In this first instance, we shall work with trivial principal bundles. Notice that owing to \cite[Section I]{Geroch1968} principal $\mathrm{Spin}_{r,1}^0$-bundles over non-compact spacetimes admit a global section $\sigma\colon M \to P_{\mathrm{Spin}_{r,1}^0}$, and are thus always trivial. To simplify the computations and the definition in this first case,  we make the following assumptions: 
\begin{enumerate}[(i)]
\item \emph{we assume that the principal $G$-bundle $P_G$ is trivial}; this holds, for instance, if we consider a contractible spacetime, or if we consider principal bundles $P_{U(1)}$ whose first characteristic class $c_1(P)$ vanishes.

As $P_G$ is trivial, there exists a global section $\sigma_G\colon M \to P_G$; we can use it to define the global section
\[
M\ni p\overset{\tilde{\sigma}}{\mapsto}(p, \sigma(p), \sigma_G(p))
\]
of the direct product bundle $P_{\mathrm{Spin}_{r,1}^0}+ P_G$. It follows easily that $f_{\mathrm{Spin}_{r,1}^0}\circ \tilde{\sigma} = \sigma$, $f_G\circ \tilde{\sigma} = \sigma_G$. Moreover, notice that as the principal bundles $P_{\mathrm{Spin}_{r,1}^0}$ and $P_{\mathrm{Spin}_{r,1}^0} + P_G$ are trivial, the associated fiber bundles are also trivial, with diffeomorphisms given by
\begin{alignat*}{3}
\varphi^{s}\colon (M \times \mathrm{Spin}_{r,1}^0) \times_{\Delta_{r,1}^{\mathbb{C}}} V &\to \quad M \times V \qquad \varphi^{s,G}\colon (M \times (\mathrm{Spin}_{r,1}^0 \times G)) \times_{\rho} V &&\to \quad M \times V\\
[(p,s), v]\qquad  &\mapsto (p, \Delta_{r,1}^{\mathbb{C}}(s) v) \qquad \qquad \qquad  [(p, s,g), v] &&\mapsto (p, \rho(s,g)v)\ .
\end{alignat*}
As the vector bundles are trivial, the Hermitian metrics $(\cdot, \cdot)$ on $S(M)$ and $(\cdot, \cdot)_G$ on $S_G(M)$ can be induced by a Hermitian metric $(\cdot, \cdot)_V$ on $V$; notice that thanks to \cite[Proposition 12.1.27]{Nicolaescu2020} the Hermitian metric can be chosen to be such that Clifford multiplication satisfies
\[
\rho(\phi)^\ast = -\rho(\phi^\dagger), \qquad \phi^\dagger = \tilde{\alpha}(\phi^t)
\]
where $\cdot^{t}\colon \cl_{r,1} \to \cl_{r,1}$ denotes the transpose and $\tilde{\alpha}$ is the extension of $\mathbb{R}^{r,1}\ni \mathbf{v}\to -\mathbf{v}$ to $\cl_{r,1}$. We shall denote the Hermitian matrix associated with $(\cdot, \cdot)_V$ by $(h_{ij})$.
\item As a second core assumption, \emph{we shall assume that the (global) gauge potential}
\[
\mathscr{A}(\sigma,\omega)\doteq \sigma_G^\ast \omega^G = (f_G\circ \tilde{\sigma})^\ast \omega^G \in \Omega^1(M, \mathfrak{g})
\]
\emph{is compactly supported}; its support will be denoted by $\mathrm{supp}(\mathscr{A})$.
\item \emph{We will assume that the representation $\rho_G \colon G \to \mathrm{GL}(V)$ commutes with the Clifford algebra representation inducing $\Delta_{r,1}^{\mathbb{C}}\colon \mathrm{Spin}_{r,1}^0 \to \mathrm{GL}(V)$}. Notice that this greatly reduces the freedom in the group $G$: indeed, it can be shown \cite[Theorem 4.3]{Lawson1990} that 
\[
\begin{split}
\ccl_{r+1}\simeq M(2^{\lfloor (r+1)/2 \rfloor}, \mathbb{C}) & \ \text{if} \ r+1 = 0 \mod 2 \\ 
\ccl_{r+1}\simeq M(2^{\lfloor (r+1)/2 \rfloor}, \mathbb{C})\oplus M(2^{\lfloor (r+1)/2 \rfloor}, \mathbb{C}) & \ \text{if} \ r+1 = 1 \mod 2 \qquad 
\end{split}
\]
and that an irreducible $\mathbb{C}$-module for $\cl_{r,1}$ (which descends from an irreducible module for $\ccl_{r+1}$) has complex dimension $2^{\lfloor (r+1)/2 \rfloor}$. We are therefore assuming that the image of $\rho_G$ lies in the centre of the algebra $M(2^{\lfloor (r+1)/2 \rfloor}, \mathbb{C})$, that is, $\rho_G(G)\subseteq \mathbb{C}\mathrm{id}_V$. Although it may seem rather restrictive, this case encompasses the interesting case of $G= U(1)$, that is, the case of electrically charged spinors.
\end{enumerate}
The global section $\sigma_G \colon M \to P_G$ allows us to construct the following maps:
\begin{definition}
\label{Def:2.1}
Let us define the maps $\mathcal{p}\colon F_G \to F$,
\be
\label{Eq:2.1}
F_G\ni f_g \mapsto (\mathcal{p}f_g)(p, s) \deq f_g(p, s, \sigma_G(p))
\ee
and $\mathcal{i}\colon F \to F_G$,
\be
\label{Eq:2.2}
F\ni f \mapsto(\mathcal{i}f)(p,s,g) \deq \rho_G(\tilde{g}^{-1})f(p,s)
\ee
where $\tilde{g}\in G$ is the unique group element such that $r_{\tilde{g}} \sigma_G(p) = (p, g)$, which exists as the action of $G$ on $P_G$ is free and transitive.
\end{definition}
Notice that Definition \ref{Def:2.1} is well-posed, in the sense that given $f_g\in F_G$, $\mathcal{p}f_g$ is $\text{right}-\mathrm{Spin}_{r,1}^0$ equivariant, and given $f\in F$, $\mathcal{i}f$ is $\text{right}-\mathrm{Spin}_{r,1}^0\times G$ equivariant: indeed,
\[
\begin{split}
\left(r_{\overline{s}}^\ast (\mathcal{p}f_g)\right)(p,s) &= (\mathcal{p}f_g)(p, r_{\overline{s}}s) = f_g(p, r_{\overline{s}} s, \sigma_G(p)) = (r_{(\overline{s}, \text{id}_G)}^\ast f_g)(p,s, \sigma_G(p)) \\
& = \rho(\overline{s}^{-1}, \text{id}_G)f_g(p, s, \sigma_G(p)) = \Delta_{r,1}^{\mathbb{C}}(\overline{s}^{-1})(\mathcal{p}f_g)(p,s)
\end{split}
\]
and
\[
\begin{split}
\left(r_{(\overline{s},\overline{g})}^\ast (\mathcal{i}f)\right)(p,s, g) &= (\mathcal{i}f)(p, r_{\overline{s}}s, r_{\overline{g}}g) = \rho_G(\overline{g}^{-1})\rho_G(\tilde{g}^{-1})f(p, r_{\overline{s}}s) \\
& = \rho_G(\overline{g}^{-1})\rho_G(\tilde{g}^{-1})r_{\overline{s}}^\ast f(p,s) = \rho_G(\overline{g}^{-1})\rho_G(\tilde{g}^{-1}) \Delta_{r,1}^{\mathbb{C}}(\overline{s}^{-1})f(p,s) \\
& = \Delta_{r,1}^{\mathbb{C}}(\overline{s}^{-1})\rho_G(\overline{g}^{-1})\rho_G(\tilde{g}^{-1})f(p,s) = \rho(\overline{s}^{-1}, \overline{g}^{-1})(\mathcal{i}f)(p,s,g)\ .
\end{split}
\]
We can combine the maps $\mathcal{i}$ and $\mathcal{p}$ defined in Definition \ref{Def:2.1} with the linear maps in (\ref{Eq:1.2}) to yield maps
\be
\label{Eq:2.3}
\begin{split}
\mathcal{i} \colon \Gamma(S(M))\to \Gamma(S_G(M)) &\qquad \mathcal{p}\colon \Gamma(S_G(M))\to \Gamma(S(M))\\
s\hspace{0.5cm} \mapsto \hspace{0.3cm}(\mathcal{i}s^\sharp)^{\flat_G}\hspace{0.3cm} &\qquad \hspace{1.1cm} s\hspace{0.5cm}\mapsto \hspace{0.3cm}(\mathcal{p}s^{\sharp_G})^\flat
\end{split}
\ee
which are denoted with the same symbol for the sake of notational simplicity.
Let us now endow the spaces $\Gamma(S(M))$ and $\Gamma(S_G(M))$ with the Fréchet topology induced by the families of seminorms
\be
\label{Eq:2.4}
\begin{split}
\norm{s}_{K,n}& \deq \max_{0\le i\le n} {\sup_{p\in K}\norm{\left({\nabla^s}^i s\right)(p)}_{S(M)\ox T^\ast M^{\ox i}}} \\
\norm{s}_{K,n}^G & \deq \max_{0\le i\le n} {\sup_{p\in K}\norm{\left({\nabla^{s,G}}^i s\right)(p)}_{S_G(M)\ox T^\ast M^{\ox i}}}\ .
\end{split}
\ee
We shall indicate these topological spaces as $\pazocal{E}(S(M))$ and $\pazocal{E}(S_G(M))$ respectively.
\begin{lemma}
\label{Lem:2.2}
The maps $\mathcal{i}$ and $\mathcal{p}$ defined in (\ref{Eq:2.3}) are continuous with respect to the Fréchet topologies on $\pazocal{E}(S(M))$ and $\pazocal{E}(S_G(M))$.
\end{lemma}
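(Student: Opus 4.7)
The plan is to reduce the continuity statement to a bounded-multiplication estimate by exploiting the global trivializations afforded by the sections $\sigma\colon M\to P_{\mathrm{Spin}_{r,1}^0}$ and $\tilde{\sigma}=(\mathrm{id},\sigma,\sigma_G)\colon M\to P_{\mathrm{Spin}_{r,1}^0}+P_G$. Under these trivializations, sections of $S(M)$ and of $S_G(M)$ are both put in bijection with the same space $C^\infty(M,V)$ via $s\mapsto s^\sharp\circ\sigma$ and $s\mapsto s^{\sharp_G}\circ\tilde{\sigma}$. Direct evaluation of (\ref{Eq:2.1})--(\ref{Eq:2.2}) at $\tilde{\sigma}(p)$ shows that $\mathcal{p}$ and $\mathcal{i}$ act as the identity on this common $V$-valued representative, since at $\tilde{\sigma}(p)$ the distinguished group element is $\tilde{g}=\mathrm{id}_G$, so that $\rho_G(\tilde{g}^{-1})=\mathrm{id}_V$. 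Moreover, by assumption (i), the Hermitian fibre norms on $S(M)$ and $S_G(M)$ both descend from the same $(\cdot,\cdot)_V$, so $\mathcal{p}$ and $\mathcal{i}$ are pointwise fibrewise isometries in the chosen trivializations. The only effective difference between the two Fréchet topologies is therefore carried by the gauge contribution to the covariant derivative.

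The first step is to derive the order-one intertwining identity between the two connections. Subtracting the local expressions (\ref{Eq:1.4}) for $\nabla^s$ and $\nabla^{s,G}$ on the common representative $f\in C^\infty(M,V)$, the sole surviving difference is the zero-order term ${\rho_G}_\ast(\mathscr{A}(v))\,f$. Converting back into sections via the $\flat$ and $\flat_G$ isomorphisms yields
\begin{align*}
\nabla^s_v(\mathcal{p}s)&=\mathcal{p}\bigl(\nabla^{s,G}_v s\bigr)-{\rho_G}_\ast(\mathscr{A}(v))\,\mathcal{p}s,\\
\nabla^{s,G}_v(\mathcal{i}s)&=\mathcal{i}\bigl(\nabla^s_v s\bigr)+{\rho_G}_\ast(\mathscr{A}(v))\,\mathcal{i}s,
\end{align*}
for every $v\in\mathfrak{X}(M)$; the assumption $\rho_G(G)\subseteq\mathbb{C}\,\mathrm{id}_V$ ensures that ${\rho_G}_\ast(\mathscr{A})$ acts unambiguously as scalar multiplication and commutes with the vertical Clifford/spin contribution.

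The second step is to iterate these identities by induction on $n$, applying the Leibniz rule for the tensor-product connection on $S(M)\otimes T^\ast M^{\otimes i}$ (respectively on $S_G(M)\otimes T^\ast M^{\otimes i}$). I expect the induction to produce a finite expansion of the form
\[
(\nabla^s)^n(\mathcal{p}s)\;=\;\sum_{k=0}^{n} T^{(n,k)}\cdot \mathcal{p}\bigl((\nabla^{s,G})^k s\bigr),
\]
where each coefficient $T^{(n,k)}\in\Gamma(T^\ast M^{\otimes (n-k)})$ is a universal polynomial in $\mathscr{A}$ and its iterated Levi-Civita covariant derivatives up to order $n-k$. Because $\mathscr{A}$ is compactly supported by hypothesis, every $T^{(n,k)}$ is compactly supported as well, hence bounded pointwise on any fixed compact $K\subset M$ by a constant $C^{(n,k)}_K$ depending only on $n$, $K$ and finitely many seminorms of $\mathscr{A}$.

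Combining the expansion with the fibrewise isometry of $\mathcal{p}$ and the submultiplicativity of the pointwise tensor norm will then yield, for every compact $K\subset M$ and every $n\in\mathbb{N}$, an estimate of the form
\[
\norm{\mathcal{p}s}_{K,n}\;\le\;C_{K,n,\mathscr{A}}\,\norm{s}^G_{K,n},
\]
with $C_{K,n,\mathscr{A}}=\sum_{0\le k\le i\le n}C^{(i,k)}_K$, which proves the continuity of $\mathcal{p}\colon\pazocal{E}(S_G(M))\to\pazocal{E}(S(M))$. The continuity of $\mathcal{i}$ follows by the identical argument, using the second intertwining identity in place of the first. I expect the main (but entirely routine) obstacle to be the combinatorial bookkeeping of the iterated Leibniz expansion; no analytic subtlety is involved, as everything reduces to the smoothness and compact support of $\mathscr{A}$.
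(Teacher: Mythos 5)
Your proof is correct and follows essentially the same strategy as the paper: both arguments reduce to showing that, in the global trivialization, the discrepancy $\nabla^{s,G}_v(\mathcal{i}s)-\mathcal{i}(\nabla^s_v s)$ is a zero-order multiplication by ${\rho_G}_\ast$ of the gauge potential (the paper records this for a general gauge $\widehat{\sigma}$, which is why the Maurer--Cartan term $(\pi_G\circ g)^\ast\theta_G$ appears; in the preferred gauge $\tilde\sigma$ this collapses to $\mathscr{A}=\sigma_G^\ast\omega^G$ as in your version and in Remark~\ref{Rem:2.5}), and both then close the induction on $n$ by bounding the resulting lower-order coefficients uniformly on compacts using smoothness. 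Your presentation is slightly tidier in that you commit to the preferred gauge from the outset and write out the iterated Leibniz expansion with explicit coefficients $T^{(n,k)}$, but this is a cosmetic reorganization, not a different route. One small caveat: the boundedness of the $T^{(n,k)}$ on compact $K$ comes from smoothness of $\mathscr{A}$ alone, not from its compact support (and indeed $T^{(n,n)}$ is not compactly supported), so the phrase ``compactly supported, hence bounded pointwise on any fixed compact $K$'' should read ``smooth, hence bounded pointwise on any fixed compact $K$''; compact support of $\mathscr{A}$ is used elsewhere (for $A\circ\mathcal{i}$ landing in $\pazocal{D}$), not here.
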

\begin{proof}
We need to exhibit, for each $K\subset \subset M$ and for each $n\in \mathbb{N}$, constants $k_{K,n}, k_{K,n}^G\in\mathbb{R}^+$ and families $\{(K_l, n_l)\}_{1\le l \le m}, \{(K_l^G, n_l^G)\}_{1\le l \le m^G}$ such that
\[
\norm{\mathcal{i}s}^G_{K,n}\le k_{K,n} \max_{1\le l \le m}\norm{s}_{K_l, n_l} \qquad \norm{\mathcal{p}s}_{K,n}\le k_{K,n}^G \max_{1\le l \le m^G}\norm{s}^G_{K_l^G, n_l^G}\ .
\]
The case $n=0$ is easy: indeed,
\[
\begin{split}
\norm{\mathcal{i}s}^G_{K,0} &= {\sup_{p\in K}}\norm{(\mathcal{i}s)(p)}_{S_G(M)} = \sup_{p\in K} \norm{\rho(\pi_{2}(\tilde{\sigma}(p)))(\mathcal{i}s)^{\sharp_G}(\tilde{\sigma}(p))}_{V} \\
& = \sup_{p\in K} \norm{\rho_G(\pi_2\circ \sigma_G(p))\Delta_{r,1}^{\mathbb{C}}(\pi_2 \circ \sigma(p))s^\sharp(\sigma(p))}_{V} \\ 
& \le \sup_{p\in K}\norm{\rho_G\left(\pi_2 \circ {\sigma_G(p)}\right)}_{\text{End}(V)} \sup_{p\in K}\norm{ \Delta_{r,1}^{\mathbb{C}}(\pi_2 \circ \sigma(p))s^\sharp (\sigma(p))}_V =  k_{K, 0} \norm{s}_{K, 0}
\end{split}
\]
where we use that $K\subset \subset M$ and that $\rho_G$ is a smooth representations onto a matrix algebra on $V$ finite-dimensional vector space.\\
Consider now $n=1$; in this case we compute $\nabla_{e_i}^{s,G}(\mathcal{i}s)(p)$, where $\{e_i\}_{1\le i\le \dim(M)}$ is a local pesudo-orthonormal basis for $TM$. In particular, using a section $\widehat{\sigma}\colon M \to P_{\mathrm{Spin}_{r,1}^0} + P_G$ with $\widehat{\sigma} = r_g \tilde{\sigma}$,
\[
\begin{split}
\nabla_{e_i}^{s,G}(\mathcal{i}s)(p) & = \Big[\widehat{\sigma}(p), e_i(p)\Big(\rho_G\left(\pi_G \circ g(p)\right)^{-1}s^\sharp(f_P \circ \widehat{\sigma}(p))\Big) + \Big({\Delta_{r,1}^{\mathbb{C}}}_\ast\left(\left((f_P \circ \widehat{\sigma})^\ast \omega^s\right)_p\left(e_i (p)\right)\right) \\
&\quad + {\rho_G}_\ast\left(\left((f_G \circ \widehat{\sigma}) ^\ast \omega^G\right)_p\left(e_i(p)\right)\right)\Big)(\mathcal{i}s)^{\sharp_G}(\widehat{\sigma}(p))\Big]  \\
& = \left[\widehat{\sigma}(p), \rho_G\left(\pi_G \circ g(p)\right)^{-1}\left(e_i(p)\left(s^\sharp \circ f_P \circ \widehat{\sigma}\right) + {\Delta_{r,1}^{\mathbb{C}}}_\ast\left(\left((f_P \circ \widehat{\sigma})^\ast \omega^s\right)_p(e_i(p))\right)s^\sharp(\sigma(p))\right)\right]\\
&\quad + \left[\widehat{\sigma}(p), e_i(p)\left(\rho_G\left(\pi_G \circ g(p)\right)^{-1}\right)s^{\sharp}(f_P \circ \widehat{\sigma}(p))+{\rho_G}_\ast\left(\left((f_G \circ \widehat{\sigma})^\ast \omega^G\right)_p(e_i(p))\right)(is)^{\sharp_G}(\widehat{\sigma}(p))\right]\ .
\end{split}
\]
Notice that the first term is equal to $\mathcal{i}\circ \nabla^s_{e_i}s$, while the second one can be written as
\[
\left[\widehat{\sigma}(p), {\rho_G}_\ast\left(\left(-(\pi_G \circ g)^\ast \theta_G + (f_G \circ \widehat{\sigma})^\ast \omega^G\right)_p(e_i(p))\right)(\mathcal{i}s)^{\sharp_G}(\widehat{\sigma}(p))\right]
\]
where $\theta_G$ denots the Maurer-Cartan form of $G$. Using the notation $\varepsilon_i = g(e_i, e_i)$ we have
\[
\begin{split}
&\norm{\nabla^{s,G}(\mathcal{i}s)(p)}_{S_G(M)\ox T^\ast M}^2 = \sum_{i=1}^{\dim M}\varepsilon_i\norm{\nabla^{s,G}_{e_i}(\mathcal{i}s)(p)}_{S_G(M)}^2 \\ &\le k \sum_{i=1}^{\dim M}\varepsilon_i\norm{\mathcal{i}\circ \nabla^s_{e_i} s (p)}_{S_G(M)} ^2 \\
&\qquad +k\sum_{i=1}^{\dim M}\varepsilon_i\norm{\rho\left(\pi_2 \circ \widehat{\sigma}(p)\right){\rho_G}_\ast\Big(\left(-(\pi_G \circ g)^\ast \theta_G + (f_G \circ \widehat{\sigma})^\ast \omega^G\right)_p(e_i(p))\Big) (\mathcal{i}s)^{\sharp_G}(\sigma(p))}_V^2 \ .
\end{split}
\]
Given a compact set $K \subset \subset M$, the first term can be bounded from above by
\[
k_1\sup_{p\in K}\norm{\nabla^{s}s}_{S(M)\otimes T^\ast M}^2 \le  k_1 \norm{s}_{K, 1}^2
\]
while for the second term we have the upper bound
\[
k_2\sup_{p\in K}\norm{s}_{S(M)}^2 = k_2 \norm{s}_{K,0}^2\le k_2 \norm{s}_{K,1}^2\ .
\]
Therefore we can give the estimate
\[
\norm{\mathcal{i}s}^G_{K,1} \le k_{K,1} \norm{s}_{K,1}\ .
\]
We proceed by induction: suppose we have an inequality of the form $\norm{\mathcal{i}s}^G_{K,i}\le k_{K,i}\norm{s}_{K,i}$ for $i$ up to $n-1\in \mathbb{N}$. Then by the above discussion ${\nabla^{s,G}}^n (\mathcal{i}s)$ can be written as $\mathcal{i}{\nabla^s}^n s$, which can be bounded by $k_n\norm{s}_{K,n}$, plus lower order terms in the covariant derivative of $s$ and of the map ${\rho_G}_\ast\left(\left(-(\pi_G \circ g)^\ast \theta_G + (f_G \circ \widehat{\sigma})^\ast \omega^G\right)_p(e_i(p))\right)$ which are bounded in every compact set $K\subset\subset M$ by the inductive hypothesis and the smoothness of the involved maps respectively.

An analogous proof holds for the map $\mathcal{p}\colon \Gamma(S_G(M))\to \Gamma(S(M))$: indeed,
\[
\begin{split}
\norm{\mathcal{p}s}_{K,0} &= \sup_{p\in K} \norm{(\mathcal{p}s)(p)}_{S(M)} = \sup_{p\in K}\norm{\Delta_{r,1}^{\mathbb{C}}(\pi_2 \circ \sigma (p))(\mathcal{p}s)^\sharp\left(\sigma(p)\right)}_V= \sup_{p\in K}\norm{\Delta_{r,1}^{\mathbb{C}}(\pi_2 \circ \sigma (p))s^{\sharp_G}\left(\tilde{\sigma}(p)\right)}_V  \\
& \le \sup_{p\in K} \norm{\rho_G(\pi_2 \circ \sigma_G(p))^{-1}}_{\text{End}(V)}\sup_{p\in K}\norm{\rho_G(\pi_2 \circ \sigma_G(p))\Delta_{r,1}^{\mathbb{C}}(\pi_2 \circ \sigma (p))s^{\sharp_G}\left(\tilde{\sigma}(p)\right)}_V \\
& = k^G_{K,0}\sup_{p\in K} \norm{s}_{S^G(M)}
\end{split}
\]
while for the case $n=1$ we have
\[
\begin{split}
\nabla^s_{e_i}(\mathcal{p}s)(p) & = \left[\sigma(p), e_i(p)\left(s^{\sharp_G}(\tilde{\sigma}(p))\right) + {\Delta_{r,1}^{\mathbb{C}}}_\ast\Big((\sigma^\ast \omega^s)_p (e_i(p))\Big)s^{\sharp_G}(\tilde{\sigma}(p))\right]  \\
& = \Big[\sigma(p), e_i(p)\left(s^{\sharp_G}(\tilde{\sigma}(p))\right) + {\Delta_{r,1}^{\mathbb{C}}}_\ast\Big((\sigma^\ast \omega^s)_p (e_i(p))\Big)s^{\sharp_G}(\tilde{\sigma}(p)) \\
&\quad + {\rho_G}_\ast\Big((\sigma_G^\ast \omega^G)_p(e_i(p))\Big)s^{\sharp_G}(\tilde{\sigma}(p)) - {\rho_G}_\ast\Big((\sigma_G^\ast \omega^G)_p(e_i(p))\Big)s^{\sharp_G}(\tilde{\sigma}(p))\Big]\ .
\end{split}
\]
Notice that
\[
\left[\sigma(p), e_i(p)(s^{\sharp_G} \circ \tilde{\sigma}) + \left({\Delta_{r,1}^{\mathbb{C}}}_\ast\Big((\sigma^\ast \omega^s)_p(e_i(p))\Big)+{\rho_G}_\ast\Big((\sigma_G^\ast \omega^G)_p (e_i(p))\Big)\right)s^{\sharp_G}(\tilde{\sigma}(p))\right] = \mathcal{p}\left(\nabla_{e_i}^{s,G} s\right)(p)\ .
\]
Then 
\[
\begin{split}
   & \norm{\nabla^s (\mathcal{p}s)(p)}_{S(M)\otimes T^\ast M}^2 = \sum_{i=1}^{\dim M}\varepsilon_i \norm{\nabla^s_{e_i} (\mathcal{p}s)(p)}_{S(M)}^2 \\ &\le k\sum_{i=1}^{\dim M} \varepsilon_i\norm{\mathcal{p}\left(\nabla_{e_i}^{s,G}s\right)(p)}^2_{S(M)} 
     +k\sum_{i=1}^{\dim M}\varepsilon_i\norm{\Delta_{r,1}^{\mathbb{C}}\left(\pi_2 \circ \sigma (p)\right){\rho_G}_\ast\Big((\sigma_G^\ast \omega^G)_p(e_i(p)))\Big)s^{\sharp_G}(\tilde{\sigma}(p))}_V^2\ .
\end{split}
\]
As before, given a compact $K\subset\subset M$ the first term can be bounded from above by
\[
k_1 \sup_{p\in K} \norm{\nabla^{s,G}s}_{S_G(M)\otimes T^\ast M} \le k_1 \left({\norm{s}^G_{K,1}}\right)^2
\]
and the second term can be bounded from above by
\[
k_2 \sup_{p\in K} \norm{s}_{S_G(M)}^2 \le k_2 \left({\norm{s}^G_{K,0}}\right)^2\ .
\]
By reasoning as above by induction we have the desired result.
\end{proof}
\subsection{The explicit construction of the classical \texorpdfstring{M{\o}ller}{Møller} map}
\label{Sec:2.2}
We now use these maps, in particular $\mathcal{i}\colon \pazocal{E}(S(M))\to \pazocal{E}(S_G(M))$ to give an explicit formula for the classical M{\o}ller operator, whose properties in the examined case we recall:
\begin{definition}
\label{Def:2.3}
The \emph{classical M{\o}ller map on field configurations} is a map $R_A \colon \pazocal{E}(S(M))\to \pazocal{E}(S_G(M))$ such that
\begin{enumerate}[$(i)$]
    \item $\diro^G \circ R_A = \mathcal{i} \circ \diro$ ,
    \item $R_A(s)|_{M \setminus J^+(\mathrm{supp}(\mathscr{A}))} = (\mathcal{i}s)|_{M \setminus J^+(\mathrm{supp}(\mathscr{A}))}$ .
\end{enumerate}
\end{definition}
In this case, as anticipated, the classical M{\o}ller operator has an explicit form:
\begin{theorem}
\label{Th:2.4}
The unique solution to the requirements in Definition \ref{Def:2.3} is given by
\be
R_A = \mathcal{i} - S^G_{-} \circ A \circ \mathcal{i}
\ee
where $A\colon \pazocal{E}(S_G(M))\to \pazocal{E}(S_G(M))$ acts pointwise as
\be
\label{Eq:2.6}
(As)(p) \deq ie^j(p) \cdot_{\Gamma,G} \left[\widehat{\sigma}(p), {\rho_G}_\ast\Big(\left(-(\pi_G \circ g)^\ast \theta_G + (f_G \circ \widehat{\sigma})^\ast \omega^G\right)_p(e_j(p))\Big)s^{\sharp_G}(\tilde{\sigma}(p))\right]
\ee
with $g\colon M \to \mathrm{Spin}_{r,1}^0 \times G $ such that $\widehat{\sigma} = r_g \tilde{\sigma}$.
\end{theorem}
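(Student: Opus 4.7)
The plan is to derive from the pointwise computations in the proof of Lemma \ref{Lem:2.2} the intertwining identity
\[
\diro^G \circ \mathcal{i} = \mathcal{i}\circ \diro + A \circ \mathcal{i},
\]
use it to verify conditions (i) and (ii) of Definition \ref{Def:2.3} by a direct calculation, and finally conclude uniqueness via well-posedness of the Cauchy problem for $\diro^G$.

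To establish the intertwining identity, I would revisit the local computation of $\nabla^{s,G}_{e_j}(\mathcal{i}s)$ already carried out in Lemma \ref{Lem:2.2}, which decomposes as $\mathcal{i}(\nabla^s_{e_j}s)(p)$ plus a correction term of the form
\[
\bigl[\widehat{\sigma}(p),\, {\rho_G}_\ast\bigl((-(\pi_G\circ g)^\ast\theta_G + (f_G\circ \widehat{\sigma})^\ast\omega^G)_p(e_j(p))\bigr)(\mathcal{i}s)^{\sharp_G}(\widehat{\sigma}(p))\bigr].
\]
Contracting with $ie^j\cdot_{\Gamma,G}$, the first piece produces $\mathcal{i}(\diro s)(p)$: this uses assumption (iii), namely $\rho_G(G)\subseteq \mathbb{C}\mathrm{id}_V$, which guarantees that $\mathcal{i}$ commutes with the Clifford action. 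The second piece coincides with $(A\mathcal{i}s)(p)$ by the very definition (\ref{Eq:2.6}). Condition (i) then follows immediately, since $\diro^G\circ S^G_- = \mathrm{id}$:
\[
\diro^G R_A s = \diro^G\mathcal{i}s - A\mathcal{i}s = \mathcal{i}\diro s + A\mathcal{i}s - A\mathcal{i}s = \mathcal{i}\diro s.
\]
For condition (ii), I would identify $-(\pi_G\circ g)^\ast\theta_G + (f_G\circ\widehat{\sigma})^\ast\omega^G$ as the gauge transform $\mathrm{Ad}(g_G^{-1})\mathscr{A}$ of the gauge potential $\mathscr{A} = \sigma_G^\ast\omega^G$ corresponding to the change of section $\widehat{\sigma} = r_g\tilde{\sigma}$; hence $A$ is a zeroth-order multiplication operator with $\mathrm{supp}(A\mathcal{i}s)\subseteq\mathrm{supp}(\mathscr{A})$, and the support property of the advanced Green operator $S^G_-$ yields $(R_A s - \mathcal{i}s)|_{M\setminus J^+(\mathrm{supp}(\mathscr{A}))} = 0$.

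Uniqueness is the last step: if $R_A, R'_A$ both satisfy Definition \ref{Def:2.3}, then $u \deq (R_A - R'_A)s$ solves $\diro^G u = 0$ and vanishes on $M\setminus J^+(\mathrm{supp}(\mathscr{A}))$, a region containing a whole Cauchy surface lying to the past of the compactly supported gauge potential. Well-posedness of the Cauchy problem for the first-order Dirac operator on the globally hyperbolic spacetime $(M,g)$ then forces $u\equiv 0$. The main obstacle I expect is the careful identification of the correction term from Lemma \ref{Lem:2.2} with the operator $A$ in (\ref{Eq:2.6}), and in particular verifying that $\mathcal{i}$ genuinely commutes with the Clifford multiplication through assumption (iii); the remaining steps amount to bookkeeping with standard propagator identities.
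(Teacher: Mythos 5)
Your proposal is correct and mirrors the paper's argument for the existence part: you derive the intertwining identity $\diro^G \circ \mathcal{i} = \mathcal{i}\circ \diro + A\circ\mathcal{i}$ from the pointwise decomposition of $\nabla^{s,G}_{e_j}(\mathcal{i}s)$ already worked out in Lemma~\ref{Lem:2.2}, use the commutation of $\mathcal{i}$ with Clifford multiplication (justified by assumption~(iii)), and then verify conditions (i) and (ii) exactly as the paper does, via $\diro^G\circ S^G_- = \mathrm{id}$ on compactly supported sections and the future-directed support property of $S^G_-$. One terminological slip: in the paper's convention $S^G_-$ is the \emph{retarded} Green operator ($\supp S_-(s)\subseteq J^+(\supp s)$), not the advanced one; the support property you invoke is the right one, only the name is off.

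Where you genuinely diverge is the uniqueness step. You argue directly: the difference $u = (R_A - R'_A)s$ satisfies $\diro^G u = 0$ and vanishes on $M\setminus J^+(\mathrm{supp}(\mathscr{A}))$, a set containing a full Cauchy surface, so well-posedness of the Cauchy problem for the first-order Dirac operator forces $u\equiv 0$. The paper instead squares the Dirac operator, works with the normally hyperbolic operator $\diro^G\circ\diro^G$ (for which the standard references it cites apply), and patches together a global solution via an invading sequence of compact Cauchy developments $\widehat{K}_n$ and cutoffs $\chi_n$. Your route is shorter and cleaner conceptually, but relies on well-posedness for first-order symmetric hyperbolic systems on globally hyperbolic manifolds, a result true but not among the references the paper invokes; the paper's detour through the wave operator is precisely designed to land in the normally hyperbolic framework of B\"ar--Ginoux--Pf\"affle, where existence, uniqueness, and continuous dependence are readily quotable. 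Both arguments are sound, and yours could be substituted in provided a citation for the first-order Cauchy problem (e.g.\ results on symmetric hyperbolic systems or Dirac-type operators) is supplied.
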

\begin{remark}
\label{Rem:2.5}
Notice that the map $A\colon \pazocal{E}(S_G(M))\to \pazocal{E}(S_G(M)) $, as it is written in (\ref{Eq:2.6}), is gauge independent: indeed, given another section $\overline{\sigma}\colon M \to P_{\mathrm{Spin}_{r,1}^0} + P_G$ we have that $\overline{\sigma} = r_{g'} \circ \widehat{\sigma}$ where $g'\colon M \to \mathrm{Spin}_{r,1}^0 \times G$ and 
\[
f_P(r_{g'} \circ \widehat{\sigma}) = r_{\pi_{\mathrm{Spin}}(g')}f_P(\widehat{\sigma}) \qquad f_G(r_{g'} \circ \widehat{\sigma}) = r_{\pi_{G}(g')}f_G(\widehat{\sigma})\ .
\]
The following relations holds between the different pullbacks of the connection form $\omega^G$ and of the Maurer-Cartan form $\theta_G$:
\[
\begin{split}
(f_G \circ \overline{\sigma})^\ast \omega^G & = (r_{\pi_G(g')} \circ f_G \circ \widehat{\sigma})^\ast \omega^G = \text{Ad}_{\pi_G(g')^{-1}} \circ (f_G \circ \widehat{\sigma})^\ast \omega^G + (\pi_G(g'))^\ast \theta_G \ , \\
(\pi_G(g g'))^\ast \theta_G &= (r_{\pi_G(g')} \circ \pi_G(g))^\ast \theta_G = \text{Ad}_{\pi_G(g')^{-1}}\circ (\pi_G(g))^\ast \theta_G + (\pi_G(g'))^\ast \theta_G\ .
\end{split}
\]
Therefore,
\[
\begin{split}
& ie^j(p) \cdot_{\Gamma,G} \left[\overline{\sigma}(p), {\rho_G}_\ast \Big(\left(-(\pi_G(gg'))^\ast \theta_G + (f_G \circ \overline{\sigma})^\ast \omega^G\right)_p(e_j(p))\Big) s^{\sharp_G}(\overline{\sigma}(p))\right]  \\
& = ie^j(p)\cdot_{\Gamma,G} \left[\widehat{\sigma}(p),  \rho(g'(p)){\rho_G}_\ast \Big(\text{Ad}_{\pi_G(g')^{-1}} \circ \left((f_G \circ \widehat{\sigma})^\ast \omega^G - (\pi_G(g))^\ast \theta_G\right) \Big)_p(e_j(p))\rho({g'}^{-1}(p))s^{\sharp_G}(\widehat{\sigma}(p))\right]  \\
& = ie^j(p)\cdot_{\Gamma,G} \left[\widehat{\sigma}(p), {\rho_G}_\ast\Big(\left(-(\pi_G(g))^\ast \theta_G+  (f_G \circ \widehat{\sigma})^\ast \omega^G\right)_p(e_j(p)) \Big)s^{\sharp_G}(\widehat{\sigma}(p))\right]\ .
\end{split}
\]
In our case we do have a preferred gauge choice, given by the section $\tilde{\sigma}\colon M \to P_{\mathrm{Spin}_{r,1}^0}+P_G$; using that section, as $g \equiv \text{id}_G$ we then have
\[
(As)(p) = ie^j(p)\cdot_{\Gamma,G} \left[\tilde{\sigma}(p), {\rho_G}_\ast\Big(\left(\sigma_G^\ast \omega^G\right)_p(e_j(p)) \Big)s^{\sharp_G}(\tilde{\sigma}(p))\right]
\]
which, by the properties required to $\sigma_G^\ast \omega^G$, is compactly supported in $\mathrm{supp}(\mathscr{A})$. As such, in the following the gauge potential shall be the one induced by the section ${\sigma}_G$.
\end{remark}
\begin{proof}[Proof of Theorem \ref{Th:2.4}]
First of all, notice that the composition $S_-^G \circ A \circ \mathcal{i}$ is well-defined thanks to the support properties of $\sigma_G^\ast \omega^G$: indeed, we know that 
\[
S_-^G \colon \pazocal{D}(S_G(M))\to \pazocal{E}(S_G(M))
\]
and as $As$ vanishes outside of $\mathrm{supp}(\mathscr{A})$ for every $s\in \pazocal{E}({S}_G(M))$ we have that $(A \circ \mathcal{i})(s)\in \pazocal{D}(S_G(M))$ for every $s\in \pazocal{E}(S(M))$.\\
Thanks to the properties of the retarded propagator $S_-^G$, we thus have that 
\[
\supp\left((S_-^G \circ A \circ \mathcal{i})(s)\right)\subseteq J^+\left(\supp(A \circ \mathcal{i})(s)\right)\subseteq J^+\left(\mathrm{supp}(\mathscr{A})\right)
\]
and therefore the requirement (ii) in Definition \ref{Def:2.3} is fulfilled. 

To proceed further, we need to show that $\mathcal{i}\colon \pazocal{E}(S(M))\to \pazocal{E}(S_G(M))$ commutes with the Clifford multiplication by sections of the cotangent bundle. That is, we need to show that
\[
 e^i \cdot_{\Gamma, G} \mathcal{i}(u) = \mathcal{i}\left(e^i \cdot_{\Gamma} u\right)\ .
\]
We do so by investigating the result locally. Given a section $\widehat{\sigma}\colon M \to P_{\mathrm{Spin}_{r,1}^0} + P_G$ inducing a section $f_P \circ \widehat{\sigma} \colon M \to P_{\mathrm{Spin}_{r,1}^0}$, we have that locally $\mathcal{i}(u(p)) = [\widehat{\sigma}(p), \rho_G((\pi_2 \circ h)(p))^{-1}u^\sharp(f_P \circ \widehat{\sigma}(p))]$ where $h\colon M \to \mathrm{Spin}_{r,1}^0 \times G$ is such that $\widehat{\sigma} = r_h \tilde{\sigma}$. Then
\[
\begin{split}
e^i(p) \cdot_{\Gamma, G} \mathcal{i}(u)(p)& = \left[\widehat{\sigma}(p), \left(g^{ij}(p) e_j(p)\right)\cdot_V \rho_G\left((\pi_2 \circ h)(p)\right)^{-1} u^\sharp (f_P\circ\widehat{\sigma}(p))\right]  \\
& = \left[\widehat{\sigma}(p), \rho_G\left((\pi_2 \circ h)(p)\right)^{-1}\left(\left(g^{ij}(p) e_j(p)\right)\cdot_V u^\sharp (f_P\circ\widehat{\sigma}(p))\right)\right] \\
& = \mathcal{i}(e^i \cdot_\Gamma u )(p) \ .
\end{split}
\]
Thus, if we consider now $\diro^G \circ (\mathcal{i} - S_-^G \circ A \circ \mathcal{i})$ by the proof of Lemma \ref{Lem:2.2} and by the above discussion we know that
\[
\diro^G \circ \mathcal{i} = \mathcal{i} \circ \diro + A \circ \mathcal{i}
\]
and using the property $\diro^G \circ S_-^G = \text{id}_{\pazocal{D}(S_G(M))}$ of the retarded propagator associated to $\diro^G$ we have $\diro^G \circ (S_-^G \circ A \circ \mathcal{i}) = A \circ \mathcal{i}$; therefore
\[
\diro^G \circ (\mathcal{i} - S_-^G \circ A \circ \mathcal{i}) = \mathcal{i}\circ \diro\ .
\]
Therefore, also the requirement (i) in Definition \ref{Def:2.3} is fulfilled. Thus we can say that
\[
R_A = \mathcal{i} - S_-^G \circ A \circ \mathcal{i}\ .
\]
As far as the uniqueness statement is concerned, we proceed as done in \cite{Drago2017} and \cite{Ginoux2009}: given $s\in \pazocal{E}(S(M))$, let us define $R_A s \deq \psi \in \pazocal{E}(S_G(M))$; then we have that $\psi$ satisfies the following:
\[
\diro^G \psi = (\mathcal{i} \circ \diro) s \ ,\quad \psi|_{M \setminus J^+(\mathrm{supp}(\mathscr{A}))} = (\mathcal{i}s)|_{M \setminus J^+(\mathrm{supp}(\mathscr{A}))}\ .
\]
As $M$ is globally hyperbolic, it is isometric to $\mathbb{R}\times \Sigma$, with $\{a\}\times \Sigma$ Cauchy surface for every $a\in\mathbb{R}$. In particular, we can assume that $\{0\}\times \Sigma \subseteq M \setminus J^+(\mathrm{supp}(\mathscr{A}))$. Let $\{K_n\}_{n\in\mathbb{N}}$ be an invading sequence of compact sets for $\{0\}\times \Sigma$, and define
\[
\widehat{K}_n \deq D(K_n) \cap [-n, n] \times \Sigma
\]
where $D(K_n)$ denotes the Cauchy development of $K_n$. We then consider the family $\{\chi_n\}_{n\in\mathbb{N}}\subseteq \pazocal{D}(M)$, with $\chi_n \equiv 1$ on $\widehat{K}_n$. Using these family, we consider
\[
\begin{cases}
(\diro^G \circ \diro^G) \phi_n = (\mathcal{i}\circ \diro)(\chi_n s)\, & \text{on} \ M\\
\phi_n= (S_-^G \circ \mathcal{i})(\chi_n s)\, & \text{on} \ \{0\} \times \Sigma \\
\nabla^{s,G}_{\nu} \phi_n = \nabla^{s,G}_{\nu} \left((S_-^G \circ \mathcal{i})(\chi_n s)\right)\, & \text{on} \ \{0\} \times \Sigma\ .
\end{cases}
\]
Notice that $\diro^G \circ \diro^G$ is normally hyperbolic, and thus the above system admits a unique solution which depends continuously on the initial data \cite{Bar2007}. Notice in particular that on $M \setminus J^+(\mathrm{supp}(\mathscr{A}))$ the map
\[
(S_-^G \circ \mathcal{i}) (\chi_n s)
\]
is a solution of the above problem, as there the part due to the gauge potential vanishes. Therefore by the uniqueness of the solution we have
\[
\phi_n |_{M\setminus J^+(\mathrm{supp}(\mathscr{A}))} = (S_-^G \circ \mathcal{i}) (\chi_n s)|_{M \setminus J^+(\mathrm{supp}(\mathscr{A}))}\ .
\]
By the properties of $\{\widehat{K}_n\}_{n\in\mathbb{N}}$ and $\{\chi_n\}_{n\in\mathbb{N}}$ and an analogous reasoning, we have that if $m>n$ then $\phi_m = \phi_n$ on $\widehat{K}_n$; therefore by defining
\[
\phi(p)\deq \phi_n(p) \ \text{with} \ n \ \text{such that} \ p\in \widehat{K}_n 
\]
and $\psi \deq \diro^G \phi$ we have that
\[
\diro^G \psi = (\diro^G \circ \diro^G) \phi = (\mathcal{i} \circ \diro)s\ , \quad \psi|_{M \setminus J^+(\mathrm{supp}(\mathscr{A}))} = (\diro^G \circ \phi)|_{M \setminus J^+ (\mathrm{supp}(\mathscr{A}))} = (\mathcal{i}s)|_{M \setminus J^+ (\mathrm{supp}(\mathscr{A}))}\ .
\]
By the uniqueness, the $\psi$ above is unique for every $s$, and therefore $R_A$ is as well.
\end{proof}
\begin{remark}
\label{Rem:2.6}
Notice that $R_A \colon \pazocal{E}(S(M)) \to \pazocal{E}(S_G(M))$ admits both a right and left inverse. Let us consider the map
\be
\label{Eq:2.7}
\widehat{R}_A \deq \mathcal{p} + S_- \circ \mathcal{p} \circ A
\ee
and consider
\[
R_A \circ \widehat{R}_A = \mathcal{i} \circ \mathcal{p} + \mathcal{i} \circ S_- \circ \mathcal{p} \circ A -S_-^G \circ A \circ \mathcal{i} \circ \mathcal{p} - S_-^G \circ A \circ \mathcal{i} \circ S_- \circ \mathcal{p} \circ A\ .
\]
It is easy to see that $\mathcal{i}\circ\mathcal{p} = \text{id}$ and $\mathcal{p} \circ \mathcal{i} = \text{id}$ on ${\pazocal{E}(S_G(M))}$ and ${\pazocal{E}(S(M))}$ respectively, while by the previous proof we know that $A \circ \mathcal{i} = \diro^G \circ \mathcal{i} - \mathcal{i} \circ \diro$; combining these observation we conclude that
\[
R_A \circ \widehat{R}_A = \text{id} + \mathcal{i} \circ S_- \circ \mathcal{p} \circ A - S_-^G \circ A - S_-^G \circ \left(\diro^G \circ \mathcal{i} - \mathcal{i} \circ \diro\right) \circ S_- \circ \mathcal{p} \circ A\ .
\]
Due to the properties of the propagators and of the support of $A$, using the fact that $M$ is globally hyperbolic and the fact that $S_-^G \circ \diro^G = \mathrm{id}_{\pazocal{E}(S_G(M))}$ on smooth functions with past compact support we then infer that 
\[
R_A \circ \widehat{R}_A = \text{id}_{\pazocal{E}(S_G(M)} + \mathcal{i}\circ S_- \circ \mathcal{p} \circ A - S_-^G \circ A - \mathcal{i}\circ S_- \circ \mathcal{p} \circ A + S_-^G \circ \mathcal{i} \circ \mathcal{p}\circ A = \text{id}_{\pazocal{E}(S_G(M))}\ .
\]
In the same way, using the fact that
\[
(\mathcal{p} \circ A)(s) = ie^j(p)\cdot_{\Gamma,G} \left[\tilde{\sigma}(p), {\rho_G}_\ast\Big(\left(\sigma_G^\ast \omega^G\right)_p(e_j(p)) \Big)s^{\sharp_G}(\tilde{\sigma}(p))\right] = \mathcal{p}\circ \diro^G - \diro \circ \mathcal{p}
\]
one can show that
\[
\widehat{R}_A \circ R_A = \text{id}_{\pazocal{E}(S(M))}\ .
\]
We shall thus write $\widehat{R}_A$ as $R_A^{-1}$.
\end{remark}
\section{The behaviour of the classical \texorpdfstring{M{\o}ller}{Møller} map in the case of a \texorpdfstring{$U(1)$}{U(1)} gauge charge}
\subsection{The classical \texorpdfstring{M{\o}ller}{Møller} map and Green operators}
If the gauge group $G$ is $U(1)$ the entwining maps $\mathcal{i}$ and $\mathcal{p}$ defined in Section \ref{Sec:2.1} and the classical M{\o}ller map on field configurations $R_A$, whose explicit expression is given in Theorem \ref{Th:2.4}, enjoy some further properties which we now explore.

First of all, notice that if we consider the hermitian inner products on $S(M)$ and $S_G(M)$, then $\mathcal{i}$ and $\mathcal{p}$ are the adjoint of one another: indeed,
\[
\begin{split}
\left(s, (\mathcal{i}t)\right)_G (p)& = \left(\rho(\pi_2 \circ \widehat{\sigma}(p))s^{\sharp_G}(\widehat{\sigma}(p)), \rho(\pi_2 \circ \widehat{\sigma}(p))(\mathcal{i}t)^{\sharp_G}(\widehat{\sigma}(p)) \right)_V  \\
& = \left(\Delta_{r,1}^{\mathbb{C}}\left(\pi_{\mathrm{Spin}} \circ \pi_2 \circ \widehat{\sigma}(p)\right) \rho(g(p))^{-1}s^{\sharp_G}\left(\tilde{\sigma}(p)\right), \Delta_{r,1}^{\mathbb{C}}\left(\pi_{\mathrm{Spin}} \circ \pi_2 \circ \widehat{\sigma}(p) \right) \rho\left(g(p)\right)^{-1}t^{\sharp}( {\sigma}(p))\right)_V \\ 
& = \left(\Delta_{r,1}^{\mathbb{C}}\left(\pi_2 \circ \sigma (p)\right)s^{\sharp_G}\left(\tilde{\sigma}(p)\right), \Delta_{r,1}^{\mathbb{C}}\left(\pi_2 \circ \sigma(p) \right) t^\sharp({\sigma}(p))\right)_V \\
& = \left(\Delta_{r,1}^{\mathbb{C}}\left(\pi_2 \circ \sigma (p)\right)(\mathcal{p}s)^{\sharp}\left({\sigma}(p)\right), \Delta_{r,1}^{\mathbb{C}}\left(\pi_2 \circ \sigma(p) \right) t^\sharp({\sigma}(p))\right)_V  \\
&  = \left((\mathcal{p}s), t\right)(p)
\end{split}
\]
where we have used the fact that $G = U(1)$.

We now move on to the behaviour of the map $A\colon \pazocal{E}(S_G(M)) \to \pazocal{E}(S_G(M))$ defined in (\ref{Eq:2.6}). First of all, recall that $\mathfrak{u}(n)$ consists of skew-hermitian matrices, and thus $\mathfrak{u}(1)$ consists of purely imaginary complex numbers; therefore,
\[
\begin{split}
 \left(s, (At)\right)_G (p) &= \left(s(p), (At)(p)\right)_G   \\
& = \left(s^{\sharp_G}(\widehat{\sigma}(p)), i(g^{kj}(p)e_{j}(p)) \cdot_V \left({\rho_G}_\ast\left(\left(-(\pi_G \circ g)^\ast \theta_G + (f_G \circ \widehat{\sigma})^\ast \omega^G\right)_p(e_k(p))\right)t^{\sharp_G}(\widehat{\sigma}(p))\right) \right)_V  \\
& = -\left( i(g^{kj}(p)e_{j}(p)) \cdot_V s^{\sharp_G}(\widehat{\sigma}(p)), {\rho_G}_\ast\left(\left(-(\pi_G \circ g)^\ast \theta_G + (f_G \circ \widehat{\sigma})^\ast \omega^G\right)_p(e_k(p))\right)t^{\sharp_G}(\widehat{\sigma}(p)) \right)_V  \\
& = \left( i{\rho_G}_\ast\left(\left(-(\pi_G \circ g)^\ast \theta_G + (f_G \circ \widehat{\sigma})^\ast \omega^G\right)_p(e_k(p))\right)\left((g^{kj}(p)e_{j}(p)) \cdot_V s^{\sharp_G}(\widehat{\sigma}(p))\right), t^{\sharp_G}(\widehat{\sigma}(p)) \right)_V  \\
& = \left(i(g^{kj}(p)e_{j}(p)) \cdot_V \left({\rho_G}_\ast\left(\left(-(\pi_G \circ g)^\ast \theta_G + (f_G \circ \widehat{\sigma})^\ast \omega^G\right)_p(e_k(p))\right) s^{\sharp_G}(\widehat{\sigma})\right), t^{\sharp_G}(\widehat{\sigma})\right)_V  \\
& = \left((As)(p), t(p)\right)_G = \left((As), t\right)_G (p)\ .
\end{split}
\]
Let us then consider the formal adjoint ${R_A}^\ast \colon \pazocal{E}(\overline{S_G(M)})\to \pazocal{E}(\overline{S(M)})$ of the classical M{\o}ller map on field configurations, i.e.
\[
\int_M \left((R_At), s\right)_G(p) \, d\mu_g = \int_M \left(t, ({R_A}^\ast s)\right)\, d\mu_g \ \text{for every} \ t\in \pazocal{D}(S(M)), s\in \pazocal{D}(\overline{S_G(M)})\ .
\]
Notice that as $R_A\colon \pazocal{E}(S(M))\to \pazocal{E}(S_G(M))$ is linear, using the antilinear isomorphisms between a vector bundle and its conjugate bundle
\[
C\colon \overline{S(M)} \to {S(M)} \qquad C_G\colon \overline{S_G(M)}\to S_G(M)
\]
we can naturally induce a linear map ${\overline{R}_A}\colon \pazocal{E}(\overline{S(M)})\to \pazocal{E}(\overline{S_G(M)})$, $\overline{R}_A \deq  C_G^{-1}\circ R_A \circ C$, whose formal adjoint is given by ${\overline{R}_A}^\ast\colon \pazocal{E}(S_G(M))\to\pazocal{E}(S(M))$. In particular, it holds that ${\overline{R}_A}^\ast = \overline{R_A^\ast}$.
In light of the previous equalities, we can write
\[
{\overline{R}_A}^\ast = \mathcal{p} - \mathcal{p}\circ A \circ S_+^G \qquad R_A^\ast = C^{-1} \circ {\overline{R}_A}^\ast \circ C_G\ .
\]
Notice that the antilinear isomorphisms can be used to define
\[
\overline{\mathcal{i}}\colon \pazocal{E}(\overline{S(M)})\to \pazocal{E}(\overline{S_G(M)}) \qquad 
\overline{\mathcal{p}}\colon \pazocal{E}(\overline{S_G(M)})\to \pazocal{E}(\overline{S(M)}) \ .
\]
These maps, as well as the map $\overline{R}_A\colon \pazocal{E}(\overline{S(M)})\to\pazocal{E}(\overline{S_G(M)})$ satisfy the same results as the ones previously proven.
\begin{proposition}
\label{Prop:3.1}
The advanced and retarded propagators of $\diro^G$ and $\diro$ are related by
\be
\label{Eq:3.1}
S_-^G = R_A \circ S_- \circ \mathcal{p} \qquad S_+^G = \mathcal{i} \circ S_+ \circ {\overline{R}_A}^\ast\ .
\ee
where the composition of maps appearing on the right-hand sides are restricted to $\pazocal{D}(S_G(M))$.
\end{proposition}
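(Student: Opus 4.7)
The plan is to verify, for each of the two identities, that the proposed right-hand side satisfies the characterizing properties of the corresponding Green operator of $\diro^G$ (an inhomogeneous solution with the appropriate past- or future-compact support), and then invoke the uniqueness of such solutions on the globally hyperbolic spacetime $M$.

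For the first identity $S_-^G = R_A \circ S_- \circ \mathcal{p}$, I would apply $\diro^G$ to the right-hand side: the intertwining relation of Definition \ref{Def:2.3}, namely $\diro^G \circ R_A = \mathcal{i}\circ \diro$, followed by $\diro \circ S_- = \mathrm{id}_{\pazocal{D}(S(M))}$ and by $\mathcal{i}\circ \mathcal{p} = \mathrm{id}_{\pazocal{E}(S_G(M))}$ (Remark \ref{Rem:2.6}), collapses the composition to $\mathrm{id}_{\pazocal{D}(S_G(M))}$. The compactness of $\supp(\mathscr{A})$ is the crucial input for the support analysis: $S_-(\mathcal{p}(u))$ lies in $J^+(\supp(u))$, and the correction $S_-^G \circ A \circ \mathcal{i}$ appearing in $R_A = \mathcal{i} - S_-^G \circ A \circ \mathcal{i}$ forces the output into $J^+\!\left(\supp(\mathscr{A}) \cap J^+(\supp(u))\right) \subseteq J^+(\supp(u))$. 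Hence both $R_A(S_- \mathcal{p} u)$ and $S_-^G(u)$ are past-compactly supported solutions of $\diro^G \phi = u$, so their difference is a homogeneous past-compactly supported solution, which must vanish by the standard Cauchy uniqueness on $M$ (choose a Cauchy surface to the past of its support and use uniqueness of the first-order hyperbolic system, or equivalently for the normally hyperbolic $(\diro^G)^2$).

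For the second identity $S_+^G = \mathcal{i} \circ S_+ \circ \overline{R}_A^*$, the direct route is less clean because $\diro^G$ and $\mathcal{i}$ do not commute. I would instead compose both sides on the left with $\mathcal{p}$, which is injective by $\mathcal{i}\circ\mathcal{p} = \mathrm{id}$, reducing the claim to $\mathcal{p}\circ S_+^G = S_+\circ \overline{R}_A^*$. Composing with $\mathcal{p}$ on the left the identity $\diro^G \circ \mathcal{i} = \mathcal{i}\circ \diro + A\circ \mathcal{i}$ extracted from the proof of Theorem \ref{Th:2.4}, and using $\mathcal{i}\circ\mathcal{p}=\mathrm{id}$, yields the companion intertwining $\mathcal{p}\circ \diro^G = \diro \circ \mathcal{p} + \mathcal{p}\circ A$. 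Applying $\diro$ to $S_+\circ \overline{R}_A^*(u)$ gives $\overline{R}_A^*(u)$, once I note that $\overline{R}_A^*(u) = \mathcal{p}(u) - \mathcal{p}(A(S_+^G u))$ is compactly supported (again by compactness of $\supp(\mathscr{A})$ combined with future-compactness of $\supp(S_+^G u)$). Applying $\diro$ to $\mathcal{p}(S_+^G u)$ and using the companion intertwining yields $\diro(\mathcal{p}(S_+^G u)) = \mathcal{p}(u) - \mathcal{p}(A(S_+^G u)) = \overline{R}_A^*(u)$ as well. Since both $\mathcal{p}(S_+^G u)$ and $S_+(\overline{R}_A^*(u))$ have future-compact support, Cauchy uniqueness for $\diro$ forces them to agree, and a final application of $\mathcal{i}$ delivers the stated identity.

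The main obstacle will be the careful bookkeeping of supports so that every composition is well-defined: in particular, I must confirm that $A\circ \mathcal{i}$ and $A\circ S_+^G$ produce compactly supported sections when fed inputs with past- or future-compact support (a direct consequence of the compactness of $\supp(\mathscr{A})$), and I must formulate the companion intertwining $\mathcal{p}\circ \diro^G = \diro \circ \mathcal{p} + \mathcal{p}\circ A$ precisely enough that the cancellations in the second step go through unambiguously.
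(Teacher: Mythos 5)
Your proof is correct, and it departs from the paper's argument in a way worth highlighting. For the retarded identity $S_-^G = R_A\circ S_-\circ\mathcal{p}$, your first step (applying $\diro^G$, invoking $\diro^G\circ R_A = \mathcal{i}\circ\diro$, $\diro\circ S_- = \mathrm{id}$, and $\mathcal{i}\circ\mathcal{p} = \mathrm{id}$) coincides with the paper's. But you then close the argument differently: instead of also verifying $\widehat{S}_-^G\circ\diro^G = \mathrm{id}$ and running a cutoff-function argument to control $\supp(\widehat{S}_-^G u)$, you directly read off $\supp(R_A(S_-\mathcal{p}u)) \subseteq J^+(\supp u)$ from the explicit form $R_A = \mathcal{i} - S_-^G\circ A\circ\mathcal{i}$ and the compactness of $\supp(\mathscr{A})$, then invoke Cauchy uniqueness for past-compactly supported solutions. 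This is a leaner route to the same conclusion, since $\diro^G\circ T = \mathrm{id}$ plus the support inclusion already pins down $T$ uniquely.

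For the advanced identity $S_+^G = \mathcal{i}\circ S_+\circ\overline{R}_A^\ast$, the divergence is genuine. The paper establishes the algebraic relations by a duality argument with the Hermitian pairing (using ${\diro^G}^\ast = C_G^{-1}\circ\diro^G\circ C_G$ and the adjoint relation between $\widehat{S}_-^G$ and $\widehat{S}_+^G$), followed by an ``entirely analogous'' cutoff support argument. You instead symmetrize the first argument: reduce via injectivity of $\mathcal{p}$ to $\mathcal{p}\circ S_+^G = S_+\circ\overline{R}_A^\ast$, apply $\diro$ to both candidates using the companion intertwining $\mathcal{p}\circ\diro^G = \diro\circ\mathcal{p} + \mathcal{p}\circ A$ (which is available from Remark~\ref{Rem:2.6} as $\mathcal{p}\circ A = \mathcal{p}\circ\diro^G - \diro\circ\mathcal{p}$, though your derivation is slightly terse — it requires composing with $\mathcal{p}$ on both sides of $\diro^G\circ\mathcal{i} = \mathcal{i}\circ\diro + A\circ\mathcal{i}$, not just on the left), observe both give $\overline{R}_A^\ast u$, and conclude by future-compactness and Cauchy uniqueness. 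This keeps the two halves of the proposition structurally parallel, avoids the adjoint computation, and substitutes direct support estimates for the cutoff decomposition. The paper's route has the merit of exposing the adjoint structure that reappears in Corollary~\ref{Cor:3.2} and Proposition~\ref{Prop:3.4}, but as a proof of Proposition~\ref{Prop:3.1} in isolation, your version is tighter. The support bookkeeping you flag at the end is indeed the only delicate point, and you handle it correctly: $\overline{R}_A^\ast u$ is compactly supported because $\supp(A(S_+^G u)) \subseteq \supp(\mathscr{A})\cap J^-(\supp u)$, a compact set.
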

\begin{proof}
Let us denote with $\widehat{S}_\pm^G$ the operators on the right-hand sides of the equalities in (\ref{Eq:3.1}). It is easy to see that on $\pazocal{D}(S_G(M))$ it holds that
\[
\diro^G \circ \widehat{S}^G_- = \diro^G \circ R_A \circ S_- \circ \mathcal{p}\overset{\text{Def.} \ \ref{Def:2.3}}{=} \mathcal{i} \circ \diro \circ S_- \circ \mathcal{p}= \mathcal{i} \circ \mathcal{p}  = \text{id}
\]
and
\[
\begin{split}
\widehat{S}_-^G \circ \diro^G & = R_A \circ S_- \circ \mathcal{p} \circ \diro^G  = R_A \circ S_- \circ \diro \circ R_A^{-1} = R_A \circ R_A^{-1} \\
& = \text{id}\ .
\end{split}
\]
The same relations for $\widehat{S}_+^G$ can be obtained in the following way: we know that ${\diro^G}^\ast = C_G^{-1}\circ \diro^G \circ C_G$ and that ${\widehat{S}_-}^{G^\ast} = C_G^{-1} \circ \widehat{S}_+^G \circ C_G$; therefore given any $u\in \pazocal{D}(S_G(M))$, $v\in\pazocal{D}(\overline{S_G(M)})$
\[
\begin{split}
 \int_M\left(u, v\right)_G\, d\mu_g &=\int_M \left(\diro^G \widehat{S}^G_- u, v\right)_G\, d\mu_g = \int_M \left(\widehat{S}_-^G u, C_G^{-1} \circ \diro^G \circ C_G v\right)_G\, d\mu_g  \\
& = \int_M \left(u, C_G^{-1} \circ \widehat{S}_+^G \circ \diro^G \circ C_G v\right)\, d\mu_g\ .
\end{split}
\]
As the above equality holds for any functions $u\in\pazocal{D}(S_G(M))$, $v\in\pazocal{D}(\overline{S_G(M)})$, we infer that $\widehat{S}^G_+ \circ \diro^G = \text{id}_{\pazocal{D}(S_G(M))}$. The other equality can be obtained in a similar fashion.

To conclude the proof, it suffices to show that
\[
\supp(\widehat{S}_\pm ^G (u)) \subseteq J^{\mp}(\supp(u)) \ \text{for every} \ u\in \pazocal{D}(S_G(M))\ .
\]
We proceed for the retarded propagator $\widehat{S}_-^G$, as the proof for the advanced one is entirely analogous. In particular, we shall prove that
\[
M\setminus J^+(\supp(u)) \subseteq M \setminus \supp(\widehat{S}_-^G(u))\ .
\]
To this end, it suffices to show that $M \setminus \supp(S_-^G (u)) \subseteq M \setminus \supp(\widehat{S}_-^G(u))$; therefore, let $p\notin \supp(S_-^G (u))$, and let us consider two Cauchy surfaces $\Sigma_1, \Sigma_2 \subseteq M$ such that
\begin{enumerate}[$(i)$]
    \item $\Sigma_2\subseteq J^+(\Sigma_1)$;
    \item $\Sigma_1 \cap \Sigma_2 = \varnothing$;
    \item $\left(\mathrm{supp}(\mathscr{A}) \cup \{p\} \cup \supp(u)\right)\cap J^+(\Sigma_1) = \varnothing$.
\end{enumerate}
We then consider a smooth function $\varphi\in\pazocal{E}(M)$ such that
\[
\varphi \equiv 1 \ \text{on} \ J^-(\Sigma_1), \qquad \varphi \equiv 0 \ \text{on} \ J^+(\Sigma_2)
\]
which we use to define the maps
\begin{alignat*}{3}
\varphi\colon \pazocal{E}(S(M))& \to \pazocal{E}(S(M)) \qquad \qquad \qquad \qquad  \varphi_G\colon \pazocal{E}(S_G(M))&& \to\pazocal{E}(S_G(M))\\
    s\quad &\mapsto (\varphi s)(p) \deq \varphi(p)s(p)\qquad \qquad \qquad \qquad t&&\mapsto (\varphi_G t)(p) \deq \varphi(p) t(p)
\end{alignat*}
as well as the analogous maps $1-\varphi$ and $1-\varphi_G$. Notice that $\mathcal{i}\circ \varphi = \varphi_G \circ \mathcal{i}$ and $\mathcal{p} \circ \varphi_G = \varphi \circ \mathcal{p}$ and that $\supp(\varphi \circ S_- \circ \mathcal{p} (u))$ is compact. Then 
\[
\begin{split}
    (\widehat{S}_-^G u) & =\left( R_A \circ S_- \circ \mathcal{p}\right)(u)\\ &= \left(R_A \circ \left(1-\varphi + \varphi\right) \circ S_- \circ \mathcal{p}\right)(u) \\ & = \left(R_A \circ \varphi \circ S_- \circ \mathcal{p}\right)(u) + \left(R_A \circ (1-\varphi) \circ S_- \circ \mathcal{p}\right)(u)\\ 
    &= (\mathcal{i} - S_-^G \circ A \circ \mathcal{i}) \circ (\varphi \circ S_- \circ \mathcal{p})(u) +\left( R_A \circ (1-\varphi) \circ S_- \circ \mathcal{p} \right)(u)  \\
    & = \left(S_-^G \circ \diro^G \circ \mathcal{i} \circ \varphi \circ S_- \circ \mathcal{p} \right) (u) - \left(S_-^G \circ A \circ \mathcal{i} \circ \varphi \circ S_- \circ \mathcal{p}\right)(u) +  \left(R_A \circ (1-\varphi) \circ S_- \circ \mathcal{p}\right)(u)  \\
    & = \left(S_-^G \circ \mathcal{i} \circ \diro \circ \varphi \circ S_- \circ \mathcal{p}\right)(u) + \left(R_A \circ (1-\varphi) \circ S_- \circ \mathcal{p}\right)(u)  \\
    & = \left(S_-^G \circ \mathcal{i}\circ \diro(\varphi) \circ S_- \circ \mathcal{p}\right)(u) + \left(S_-^G \circ \mathcal{i} \circ \varphi \circ \diro \circ S_- \circ \mathcal{p}\right)(u) + \left(R_A \circ (1-\varphi) \circ S_- \circ \mathcal{p}\right)(u)  \\
    & = \left(S_-^G \circ \mathcal{i} \circ \diro(\varphi) \circ S_- \circ \mathcal{p}\right)(u) + S_-^G (u) + \left(R_A \circ (1-\varphi) \circ S_- \circ \mathcal{p}\right)(u)\ .
\end{split}
\]
Now, notice that when evaluated at $p$, the above expression is zero: indeed, we know that $p\notin \supp(S_-^G (u))$, and due to the properties of $\varphi\in\pazocal{E}(M)$ we also know that $\diro(\varphi) \equiv 0 $ on $J^-(\Sigma_1)$ and $J^+(\Sigma_2)$, and therefore $(\mathcal{i} \circ \diro(\varphi) \circ S_- \circ \mathcal{p})(u)$ is supported in $J^+(\Sigma_1)\cap J^-(\Sigma_2)$. But then $S_-^G((\mathcal{i} \circ \diro(\varphi) \circ S_- \circ \mathcal{p})u)$ is supported in the causal future of that set; as $p\in J^-(\Sigma_1)$ we then have that the first term vanishes at $p$. Let us then consider the last term: we have
\[
\left(\mathcal{i} \circ (1-\varphi) \circ S_- \circ \mathcal{p}\right)(u) = \left((1-\varphi_G) \circ \mathcal{i} \circ S_- \circ \mathcal{p}\right)(u)
\]
which is equal to zero when evaluated at $p$, and 
\[
-\left(S_-^G \circ A \circ \mathcal{i} \circ (1-\varphi) \circ S_- \circ \mathcal{p}\right)(u) = -\left(S_-^G \circ A \circ (1-\varphi_G) \circ \mathcal{i} \circ S_- \circ \mathcal{p}\right)(u)
\]
which vanishes as $\mathrm{supp}(\mathscr{A}) \cap \supp(1-\varphi) = \varnothing$. Thus $p\notin \supp(\widehat{S}_-^G (u))$, and we conclude.
\end{proof}
\begin{corollary}
\label{Cor:3.2}
The causal propagators of $\diro^G$ and $\diro$ are related by
\be
\label{Eq:3.2}
S^G = R_A \circ S \circ {\overline{R}_A}^\ast\Big|_{\pazocal{D}(S_G(M))}
\ee
\end{corollary}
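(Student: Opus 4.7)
The plan is to reduce the identity to the two formulas for $S_-^G$ and $S_+^G$ established in Proposition~\ref{Prop:3.1}. From that proposition one immediately has
\[
S^G = S_-^G - S_+^G = R_A \circ S_- \circ \mathcal{p} - \mathcal{i} \circ S_+ \circ {\overline{R}_A}^\ast,
\]
so the task is to rewrite the right-hand side in the symmetric form $R_A \circ (S_- - S_+) \circ {\overline{R}_A}^\ast$. The key observation is that the discrepancies $\mathcal{p} - {\overline{R}_A}^\ast$ and $\mathcal{i} - R_A$ both factor through the same compactly supported operator $A$, so that after composition with the retarded/advanced propagators on the appropriate side they give rise to a common cross-term that cancels.

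Concretely, Theorem~\ref{Th:2.4} and the explicit form of ${\overline{R}_A}^\ast$ recalled before Proposition~\ref{Prop:3.1} yield
\[
\mathcal{i} - R_A = S_-^G \circ A \circ \mathcal{i}, \qquad \mathcal{p} - {\overline{R}_A}^\ast = \mathcal{p}\circ A \circ S_+^G.
\]
I would compose the first of these with $S_+ \circ {\overline{R}_A}^\ast$ on the right and the second with $R_A \circ S_-$ on the left, and then invoke Proposition~\ref{Prop:3.1} once more to contract $\mathcal{i} \circ S_+ \circ {\overline{R}_A}^\ast = S_+^G$ and $R_A \circ S_- \circ \mathcal{p} = S_-^G$. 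In both cases the resulting operator collapses to the common expression $S_-^G \circ A \circ S_+^G$, so the two corrections coincide. Equating them and rearranging the terms one arrives at
\[
R_A \circ S_- \circ \mathcal{p} - \mathcal{i} \circ S_+ \circ {\overline{R}_A}^\ast = R_A \circ S_- \circ {\overline{R}_A}^\ast - R_A \circ S_+ \circ {\overline{R}_A}^\ast = R_A \circ S \circ {\overline{R}_A}^\ast,
\]
which is the claimed identity.

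The only remaining verification is that all intermediate compositions are meaningful on $\pazocal{D}(S_G(M))$. This is routine: $A$ is supported in the compact set $\mathrm{supp}(\mathscr{A})$ by Remark~\ref{Rem:2.5}, while $\mathcal{i}$ and $\mathcal{p}$ act pointwise and hence preserve supports, so every expression of the form $A \circ (\cdots)$ produces an element of $\pazocal{D}$ on which the Green operators are defined, exactly as in the well-definedness arguments already used in the proofs of Theorem~\ref{Th:2.4} and Proposition~\ref{Prop:3.1}. I do not anticipate any serious obstacle: the corollary is essentially an algebraic consequence of Proposition~\ref{Prop:3.1} together with the explicit forms of $R_A$ and ${\overline{R}_A}^\ast$, the only subtlety being the symmetric bookkeeping of the cross-term $S_-^G \circ A \circ S_+^G$.
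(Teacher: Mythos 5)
Your argument is correct and follows essentially the same route as the paper: the paper packages your two relations $\mathcal{i}-R_A = S_-^G\circ A\circ\mathcal{i}$ and $\mathcal{p}-{\overline{R}_A}^\ast = \mathcal{p}\circ A\circ S_+^G$ into the auxiliary operators $o=-S_-\circ\mathcal{p}\circ A$ and $o^\dagger=-A\circ\mathcal{i}\circ S_+$ (writing $R_A-\mathcal{i}=R_A\circ o\circ\mathcal{i}$ and ${\overline{R}_A}^\ast-\mathcal{p}=\mathcal{p}\circ o^\dagger\circ{\overline{R}_A}^\ast$), then expands $R_A\circ(S_--S_+)\circ{\overline{R}_A}^\ast$ and cancels the cross terms. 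Your version runs the same cancellation in the reverse direction, starting from $S_-^G-S_+^G$ and exhibiting the common term $S_-^G\circ A\circ S_+^G$; the underlying algebra and the two applications of Proposition~\ref{Prop:3.1} are identical.
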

\begin{proof}
Let us define $o\deq -S_- \circ \mathcal{p} \circ A$; then
\[
\begin{split}
R_A - \mathcal{i} &= R_A - R_A \circ R_A^{-1} \circ \mathcal{i}  = R_A \circ \left(\text{id}-R_A^{-1}\circ \mathcal{i}\right) = R_A \circ\left(\mathcal{p}\circ \mathcal{i} - \left(\mathcal{p} + S_- \circ \mathcal{p} \circ A\right) \circ \mathcal{i} \right)  \\
& = R_A \circ (-S_- \circ \mathcal{p}\circ A \circ \mathcal{i}) = R_A \circ o \circ \mathcal{i}\ .
\end{split}
\]
Analogously, $R_A - \mathcal{i}= \mathcal{i} \circ o \circ R_A$; moreover, we also have the same relations concerning the formal adjoints:
\[
{\overline{R}_A}^\ast -\mathcal{p} = {\overline{R}_A}^\ast \circ o^\dagger \circ \mathcal{p} \qquad {\overline{R}_A}^\ast -\mathcal{p} = \mathcal{p} \circ o^\dagger \circ {\overline{R}_A}^\ast
\]
where $o^\dagger \deq -A \circ \mathcal{i} \circ S_+$. Now, using these equalities we have on $\pazocal{D}(S_G(M))$
\[
R_A \circ (S_- - S_+) \circ {\overline{R}_A}^\ast = \left(R_A \circ S_- \circ (\mathcal{p} + \mathcal{p} \circ o^\dagger \circ\overline{ R_A}^\ast) - (\mathcal{i} + R_A \circ o \circ \mathcal{i}) \circ S_+ \circ {\overline{R}_A}^\ast\right)\ .
\]
Proposition \ref{Prop:3.1} then entails that on $\pazocal{D}(S_G(M))$ we have that 
\[
\begin{split}
  R_A \circ S \circ {\overline{R}_A}^\ast & = S_-^G+ R_A \circ S_- \circ \mathcal{p} \circ (-A \circ \mathcal{i} \circ S_+)\circ {\overline{R}_A}^\ast - S_+^G - R_A \circ o \circ \mathcal{i} \circ S_+ \circ {\overline{R}_A}^\ast  \\
  & = S^G + R_A \circ o \circ i \circ S_+ \circ {\overline{R}_A}^\ast - R_A \circ o \circ i \circ S_+ \circ {\overline{R}_A}^\ast  = S^G\ .
\end{split}
\]
\end{proof}
\subsection{The classical \texorpdfstring{M{\o}ller}{Møller} map and Hadamard bidistributions}
\label{Sec:4.2}
Having assessed the properties of the classical M{\o}ller map on field configurations when coupled with the Green operators associated to the free and uncharged Dirac operators $\diro$ and $\diro^G$, we now examine the behaviour of Hadamard bidistributions when coupled with the M\o ller maps. 

First of all, let us recall that the algebras associated to the Dirac field need to account for both the spinor and cospinor field, that is, sections of both the vector bundle $S(M)$ and its conjugate bundle $\overline{S(M)}$. In order to do so, one considers the Whitney sum of the two vector bundles $S(M)\oplus \overline{S(M)}$ and $S_G(M)\oplus \overline{S_G(M)}$, which we shall denote with $S^{\oplus}(M)$ and $S^{\oplus}_G(M)$ respectively. A section $u$ of $S^{\oplus}(M)$ can be then understood as a couple $(u_1,u_2)$ with $u_1\in\pazocal{E}(S(M))$ and $u_2\in\pazocal{E}(\overline{S(M)})$; the same holds for sections of $S^{\oplus}_G(M)$.

The hermitian metric on $S(M)$, which can be understood as a bilinear map
\[
\pazocal{E}(S(M)) \times \pazocal{E}(\overline{S(M)})\ni u,v\mapsto (u,v)\in \pazocal{E}(M)
\]
can be used to induce a symmetric and bilinear map (denoted with the same symbol)
\[
\pazocal{E}(S^\oplus(M))\times \pazocal{E}(S^\oplus(M)) \ni u,v \mapsto (u,v) \deq (v_1, u_2)+(u_1, v_2) \in \pazocal{E}(M)\ .
\]
One can also define an involution on the space of sections $\pazocal{E}(S^{\oplus}(M))$ by using the conjugation maps $C\colon \overline{S(M)}\to S(M)$ and $C^{-1}\colon S(M)\to \overline{S(M)}$:
\[
\pazocal{E}(S^{\oplus}(M))\ni u=(u_1, u_2) \mapsto u^\ast \deq (C u_2, C^{-1} u_1 )\in\pazocal{E}(S^{\oplus}(M))\ .
\]
Using the Dirac operator $\diro\colon \pazocal{E}(S(M))\to \pazocal{E}(S(M))$ and its adjoint $\diro^\ast = C^{-1} \circ \diro \circ C$ as well as the causal propagators $S\colon \pazocal{D}(S(M))\to\pazocal{E}(S(M))$ and $S^\ast$ we construct the operators
\[
\diro^\oplus \deq \diro \oplus -\diro^\ast \qquad S^\oplus \deq S \oplus -S^\ast
\]
Notice that $S^\oplus$, which is the causal propagator for $\diro^\oplus$, is formally self-adjoint: indeed, given $u,v\in\pazocal{D}(S^\oplus (M))$ we have
\[
\int_M (S^\oplus u, v)\, d\mu_g = \int_M (S u_1, v_2) - (v_1, S^\ast u_2)\, d\mu_g = \int_M -(u_1, S^\ast v_2) + (S v_1, u_2)\, d\mu_g = \int_M (u, S^\oplus v)\, d\mu_g\ .
\]
Therefore, the distribution $S^\oplus \in \pazocal{D}'(S^\oplus(M) \boxtimes S^\oplus(M))$ given by the Schwartz kernel theorem,
\[
S^\oplus (u, v)\deq \int_M (S^\oplus u,v)\, d\mu_g
\]
is symmetric. Analogous extensions can be made in the case of the charged spinor bundle $S_G(M)$.

The M{\o}ller map $R_A\colon \pazocal{E}(S(M))\to\pazocal{E}(S(M))$ as well as its conjugate $\overline{R}_A\colon \pazocal{E}(\overline{S(M)})\to \pazocal{E}(\overline{S(M)})$, which satisfy Definition \ref{Def:2.3} and Theorem \ref{Th:2.4} (with suitable modifications), can be combined into one M{\o}ller map $\mathcal{R}_A\colon \pazocal{E}(S^\oplus(M))\to\pazocal{E}(S_G^\oplus(M))$,
\[
\mathcal{R}_A \deq R_A \oplus \overline{R}_A = i^\oplus - {S_-^G}^\oplus \circ A^\oplus \circ i^\oplus\ . 
\]
The same can be done with the formal adjoints of the M{\o}ller maps, yielding
\[
{\mathcal{R}_A}^\ast = \overline{R}_A^\ast \oplus  R_A^\ast \ .
\]
We now recall the definition of Hadamard bidistribution for spinor fields.
\begin{definition}
\label{Def:3.3}
A distribution $\omega\in\pazocal{D}'(S^\oplus(M) \boxtimes {S^\oplus(M)})$ satisfies the Hadamard two-point condition if given any two sections $u,v\in\pazocal{D}(S^\oplus(M))$ we have:
\begin{enumerate}[$(i)$]
    \item $\omega((\diro^\oplus u), {v}) = 0$;
    \item $\omega(u, v) + \omega(v,u) = i S^\oplus (u, {v})$;
    \item we require that
    \[
    \text{WF}(\omega) = \left\{(x, y,  \xi_x, -\xi_y)\in T^\ast M^2 \setminus z(M^2) \ | \ (x,\xi_x) \sim (y, \xi_y) \ \mathrm{or} \ x=y, \xi_x=\xi_y, \ \xi_x \triangleright 0\right\}
    \]
    where $\xi_x \triangleright 0$ means that $\xi_x$ is future-directed and lightlike, and $(x,\xi_x) \sim (y, \xi_y)$ means that $x$ can be connected to $y$ by means of a future-directed lightlike geodesic $\gamma$ such that $\xi_x$ is the cotangent vector of $\gamma$ at $x$ and $\xi_y$ is the cotangent vector of $\gamma$ at $y$.
\end{enumerate}
\end{definition}
Notice that such distributions do exist; see for instance \cite{Zahn2014} and \cite{Murro2021}.
\begin{proposition}
\label{Prop:3.4}
If $\omega\in \pazocal{D}'(S^\oplus (M)\boxtimes {S^\oplus(M)})$ is a distribution satisfying the Hadamard two-point condition, then
\be
\label{Eq:3.3}
\omega_G(\cdot, \cdot)\deq  \omega({\mathcal{R}_A}^\ast \cdot, {\mathcal{R}_A}^\ast\cdot)
\ee
is a distribution in $\pazocal{D}'(S_G^\oplus(M)\boxtimes {S_G^\oplus(M)})$ satisfying the Hadamard two-point condition.
\end{proposition}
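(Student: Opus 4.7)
The plan is to verify the three defining conditions of Definition \ref{Def:3.3} for $\omega_G$, reducing each to the corresponding property of $\omega$ through the intertwining relations established earlier.

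For condition (i), the key identity to prove first is $\overline{R}_A^\ast \circ \diro^G = \diro \circ \mathcal{p}$ on $\pazocal{D}(S_G(M))$. Substituting the explicit form $\overline{R}_A^\ast = \mathcal{p} - \mathcal{p} \circ A \circ S_+^G$, using $S_+^G \circ \diro^G = \mathrm{id}$ on compactly supported sections, and invoking the commutation relation $\mathcal{p} \circ \diro^G = \diro \circ \mathcal{p} + \mathcal{p} \circ A$ from Remark \ref{Rem:2.6} makes the desired cancellation immediate. An analogous argument on the conjugate bundle yields $R_A^\ast \circ (\diro^G)^\ast = \diro^\ast \circ \overline{\mathcal{p}}$, and summing gives $\mathcal{R}_A^\ast \circ (\diro^G)^\oplus = \diro^\oplus \circ \mathcal{p}^\oplus$ on $\pazocal{D}(S_G^\oplus(M))$, where $\mathcal{p}^\oplus \deq \mathcal{p} \oplus \overline{\mathcal{p}}$. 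Condition (i) for $\omega_G$ then follows directly from condition (i) for $\omega$.

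For condition (ii), the approach is to lift Corollary \ref{Cor:3.2} to the Whitney sum. Establishing the cospinor analogue $(S^G)^\ast = \overline{R}_A \circ S^\ast \circ R_A^\ast$ by the same technique as Corollary \ref{Cor:3.2} combines with that corollary to give $(S^G)^\oplus = \mathcal{R}_A \circ S^\oplus \circ \mathcal{R}_A^\ast$. Interpreting both sides as bidistributions through the symmetric bilinear pairing and exploiting that $\mathcal{R}_A^\ast$ is the formal adjoint of $\mathcal{R}_A$, one obtains $(S^G)^\oplus(u,v) = S^\oplus(\mathcal{R}_A^\ast u, \mathcal{R}_A^\ast v)$. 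Condition (ii) for $\omega$ then immediately yields condition (ii) for $\omega_G$.

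The substantive step is condition (iii), which I would split into two inclusions. For the upper bound, the idea is to analyse $\omega_G$ as a composition of distribution kernels $\omega \circ (\mathcal{R}_A^\ast \otimes \mathcal{R}_A^\ast)$ via H\"ormander's wave front set calculus. The crucial input is a bound on $\mathrm{WF}'(\mathcal{R}_A^\ast)$: since $\mathcal{R}_A^\ast - \mathcal{p}^\oplus$ is a composition of smooth multiplications with the advanced and retarded Green operators of $\diro^G$ acting on the compactly supported first-order operator $A$, propagation of singularities for the normally hyperbolic $(\diro^G)^2$ (which shares its principal symbol with $\diro^2$) confines $\mathrm{WF}'(\mathcal{R}_A^\ast)$ to the union of the diagonal conormal with the null bicharacteristic flow relation. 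Composing this bound with the Hadamard wave front set of $\omega$ gives the desired inclusion, and the future-directedness condition $\xi_x \triangleright 0$ is preserved because the Green operators entering $\mathcal{R}_A^\ast$ respect the causal order on $M$. For the reverse inclusion, condition (ii) already established is decisive: the antisymmetric part $\omega_G - \omega_G^t = i(S^G)^\oplus$ has wave front set equal to the full bicharacteristic flow relation, while the upper bound places $\mathrm{WF}(\omega_G)$ in the future-directed half and $\mathrm{WF}(\omega_G^t)$ in the time-reversed image, so every future-directed bicharacteristic element of $\mathrm{WF}((S^G)^\oplus)$ must lie in $\mathrm{WF}(\omega_G)$, yielding equality. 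The chief obstacle throughout is the upper bound in condition (iii): carefully controlling $\mathrm{WF}'(\mathcal{R}_A^\ast)$ and its causal character through the kernel composition is where the bulk of the microlocal work sits.
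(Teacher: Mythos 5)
Your treatment of conditions (i) and (ii) matches the paper's proof in substance. For (i) the paper obtains $\mathcal{R}_A^\ast\circ(\diro^G)^\oplus=(\diro\circ\mathcal{p})\oplus(-\diro^\ast\circ\mathcal{i}^\ast)$ by taking the formal adjoint of $\diro^G\circ R_A=\mathcal{i}\circ\diro$; your direct verification via $\overline{R}_A^\ast=\mathcal{p}-\mathcal{p}\circ A\circ S_+^G$ and the commutation relation $\mathcal{p}\circ A=\mathcal{p}\circ\diro^G-\diro\circ\mathcal{p}$ is equivalent and spells out a step the paper leaves implicit. For (ii) the paper also unpacks $S^\oplus(\mathcal{R}_A^\ast u,\mathcal{R}_A^\ast v)$ componentwise and identifies the result with $(S^G)^\oplus(u,v)$ via Corollary \ref{Cor:3.2} (and its conjugate), just as you do.

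For (iii), however, you take a genuinely different route, and it is worth noting that your version is the more careful of the two. The paper disposes of (iii) in one sentence by citing Duistermaat--H\"ormander propagation of singularities and asserting that the wave front set of $\omega_G$ ``coincides with that of $\omega$.'' As stated this is incomplete: propagation of singularities only says that $\mathrm{WF}(\omega_G)$, being the wave front set of a bisolution, is a union of full bicharacteristic strips; it does not by itself single out the future-directed half. Your proof supplies the missing steps in the classical Radzikowski fashion: an upper bound via H\"ormander's composition calculus applied to $\omega_G=\omega\circ(\mathcal{R}_A^\ast\boxtimes\mathcal{R}_A^\ast)$, using a bound on $\mathrm{WF}'(\mathcal{R}_A^\ast)$ (diagonal conormal $\cup$ bicharacteristic flow, coming from the retarded Green operator in $\mathcal{R}_A^\ast-\mathcal{p}^\oplus$) and the fact that the Hamiltonian flow preserves the time orientation of null covectors, and then a lower bound forced by condition (ii) and the known wave front set of $(S^G)^\oplus$ together with the disjointness of $\mathrm{WF}(\omega_G)$ and $\mathrm{WF}(\omega_G^{\mathrm t})$. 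The paper's brevity trades rigor for speed; your version is the one a careful reader would want to see.

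Two small slips. First, Definition \ref{Def:3.3}(ii) is a CAR relation, so the quantity controlled by the causal propagator is the \emph{symmetric} part $\omega_G(u,v)+\omega_G(v,u)=i(S^G)^\oplus(u,v)$, not the antisymmetric part $\omega_G-\omega_G^{\mathrm t}$ as you wrote; the argument goes through unchanged once this sign is corrected, since it only uses $\mathrm{WF}(\omega_G+\omega_G^{\mathrm t})\subseteq\mathrm{WF}(\omega_G)\cup\mathrm{WF}(\omega_G^{\mathrm t})$. Second, your stated reason that ``the Green operators respect the causal order on $M$'' as the source of preservation of $\xi_x\triangleright 0$ is imprecise: the retarded Green operator's wave front set contains covectors of both time orientations (causal order constrains the base points, not the covectors). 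What actually preserves $\triangleright 0$ is that the null bicharacteristic flow of the common principal symbol maps future-directed null covectors to future-directed null covectors, so composing with $\omega$'s future-directed wave front set cannot produce past-directed elements. The conclusion is correct; only the explanation needs tightening.
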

\begin{proof}
First of all, notice that $\omega_G$ defined as in (\ref{Eq:3.3}) is well-defined: indeed, $\overline{R}_A^\ast$ and ${R_A}^\ast$ are continuous with respect to the inductive limit topology of $\pazocal{D}(S_G(M))$ and $\pazocal{D}(\overline{S_G(M)})$, being linear maps which are sequentially continuous. Moreover, they evidently map compactly supported smooth functions to compactly supported smooth functions. These result directly translate to the map ${\mathcal{R}_A}^\ast$. Let us now prove that the three requirements are satisfied:
\begin{enumerate}[$(i)$]
    \item it is easy to see that $\omega_G({\diro^G}^\oplus u, {v})=0$; indeed, by Definition \ref{Def:2.3} and Theorem \ref{Th:2.4} we have
    \[
    R_A^\ast \circ {\diro^{G}}^\ast=(\diro^G \circ R_A)^\ast = (\mathcal{i} \circ \diro)^\ast = \diro^\ast \circ \mathcal{i}^\ast
    \]
    and therefore ${\mathcal{R}_A}^\ast \circ {\diro^G}^\oplus = (\diro \circ \mathcal{p})\oplus (-\diro^\ast \circ \mathcal{i}^\ast)$; this entails that
    \[
    \omega_G({\diro^G}^\oplus  u, {v}) = \omega\left({{\mathcal{R}_A}}^\ast( {\diro ^G}^\oplus  u), {\mathcal{R}_A}^\ast {v}\right) = \omega(\diro^\oplus ((\mathcal{p} \oplus \mathcal{i}^\ast) u), {\mathcal{R}_A}^\ast { v}) = 0\ .
    \]
    \item We know that $\omega_G(u,{v}) = \omega({\mathcal{R}_A}^\ast u, {\mathcal{R}_A}^\ast {v})$, and by using Corollary \ref{Cor:3.2} we have
    \[
    \begin{split}
    \omega_G(u,{v}) + \omega_G(v,{u}) & = \omega({\mathcal{R}_A}^\ast u, {\mathcal{R}_A}^\ast {v}) +\omega({\mathcal{R}_A}^\ast v, {\mathcal{R}_A}^\ast {u})\\ &= iS^\oplus({\mathcal{R}_A}^\ast u, {\mathcal{R}_A}^\ast {v}) \\ &= i \int_M \left(S^\oplus({\mathcal{R}_A}^\ast u), {\mathcal{R}_A}^\ast {v}\right)\, d\mu_g  \\
    & = i\int_M \left(S \circ \overline{R}_A^\ast u_1, R_A^\ast v_2\right)-\left(\overline{R}_A^\ast v_1, S^\ast \circ {R}_A^\ast u_2 \right)\, d\mu_g \\
    & = i\int_M \left((R_A \circ S \circ \overline{R}_A^\ast) u_1, v_2\right) -\left( v_1, (\overline{R}_A \circ S^\ast \circ {R}_A^\ast) u_2 \right)\, d\mu_g \\
    & = i \int_M \left({S^G}^\oplus u,{v}\right)_G\, d\mu_g = i {S^G}^\oplus(u,{v})\ .
    \end{split}
    \]
    \item We know that $\omega_G$ is a bisolution of ${\diro^G}^\oplus$, and that the principal symbol of ${\diro^G}^\oplus$ coincides with that of $\diro^\oplus$; therefore by \cite[Theorem 6.1.1]{Duistermat1972} we know that the wavefront set of $\omega_G$ coincides with that of $\omega$, being determined by the Hamiltonian flow associated to the principal symbol of ${\diro^G}^\oplus$.
\end{enumerate}
\end{proof}
\section{The classical \texorpdfstring{M{\o}ller}{Møller} map on the observable algebras}
\subsection{The Poisson \texorpdfstring{$\ast$}{*}-algebras of observables}
Our goal is to pass the classical M{\o}ller map on field configurations $R_A$, presented in Definition \ref{Def:2.3} and Theorem \ref{Th:2.4}, to the algebras of observables of the charged and uncharged Dirac field. We briefly recall the construction of said topological algebras, as presented in \cite{Rejzner2011}, \cite{Zahn2014} and in Section 2 of \cite{Brunetti2022}.\\
Let $E\overset{\pi}{\to} M$ be a vector bundle with typical fiber $V$, be it either $S(M)\oplus \overline{S(M)}$ or $S_G(M)\oplus \overline{S_G(M)}$, endowed with a symmetric bilinear metric $h$, and let us consider the exterior algebra of $\pazocal{E}(E)$, that is, the graded algebra
\[
\wedge^\bullet \pazocal{E}(E) = \bigoplus_{p\in\mathbb{N}} \wedge^p \pazocal{E}(E)\ .
\]
We can consider the spaces of homogeneous elements $\wedge^p \pazocal{E}(E)$ as embedded into $\Gamma(M^p, E^{\boxtimes p})\simeq \overline{\Gamma(M, E)^{{\otimes}p}}$; then using the usual Fréchet topology (uniform convergence of all derivatives on compact sets) on $\Gamma(M^p, E^{\boxtimes p})$ we define the spaces of $p$-antisymmetric sections 
\[
\pazocal{E}^a(M^p, E^{\boxtimes p}) \deq \overline{\wedge^p \pazocal{E}(E)}
\]
as well as the configuration space
\[
\mathcal{C}(E) \deq{ \widehat{\bigoplus}_{p\in \mathbb{N}} \pazocal{E}^a(M^p, E^{\boxtimes p})}
\]
where $\widehat{\bigoplus}$ denotes the algebraic direct sum. Notice that the involution defined in Section \ref{Sec:4.2} can be extended to an involution ${\cdot}^\ast \colon\mathcal{C}(E)\to\mathcal{C}(E)$ by requiring the behaviour
\[
(u_1 \wedge \cdots \wedge u_p)^\ast \deq u_p^\ast \wedge \cdots \wedge u_1^\ast
\]
on homogeneous elements in $\wedge^p \pazocal{E}(E)$, by extending the above by continuity to $\pazocal{E}^a(M^p, E^{\boxtimes p})$ and by linearity to $\mathcal{C}(E)$.

We are interested in \emph{antisymmetric functionals} on the space of sections $\pazocal{E}(E)$; these can be interpreted as a sequence $\{F_p\}_{p\in\mathbb{N}}$ of linear and continuous functionals on $\{\wedge^p \pazocal{E}(E)\}_{p\in\mathbb{N}}$, that is, a sequence of elements such that
\[
F_p \in \mathcal{F}^p (E) \deq {\pazocal{E}^a}^\prime(M^p, E^{\boxtimes p}) \ \ \text{for all} \  p\in\mathbb{N}\ ,
\]
where the $(\cdot)^\prime$ means the strong topological dual.

Thus we define the space of \emph{fermionic functionals} as
\[
\mathcal{F}(E) \deq \prod_{p\in\mathbb{N}} \mathcal{F}^p(E)\ .
\]
There exists a duality pairing between $\mathcal{F}(E)$ and $\mathcal{C}(E)$ given by
\[
\braket{F, u} \deq \sum_{p\in\mathbb{N}} \braket{F_p, u_p} \ \text{for all} \ F \in \mathcal{F}(E), \ u\in\mathcal{C}(E)\ .
\]
Notice that the sum is finite and thus always well-defined, $\mathcal{C}(E)$ being an algebraic direct sum. We can endow $\mathcal{F}(E)$ with the weak topology $\tau_\sigma$, that is, the topology given by the family of seminorms $\{p_u\}_{u\in\mathcal{C}(E)}, \ p_u(F) = \abs{F(u)}$, thus making it a locally convex topological vector space which happens to be nuclear and sequentially complete.\\
$\mathcal{F}(E)$ can be endowed with an antisymmetric, pointwise product initially defined on homogeneous elements in $\wedge^p \pazocal{E}(E)$ by
\[
(F \wedge G)_p(u_1 \wedge \cdots \wedge u_p) \deq \sum_{\sigma \in S_p} \mathrm{sgn}(\sigma)\sum_{k=0}^p \frac{1}{k! (p-k)!} F_k(u_{\sigma(1)}\wedge \cdots \wedge u_{\sigma(k)})G_{p-k}(u_{\sigma(k+1)}\wedge \cdots \wedge u_{\sigma(p)})\ .
\]
The object above is then extended by linearity and continuity to elements in $\pazocal{E}^a(M^p, E^{\boxtimes p})$, thus yielding a well-defined object in $\mathcal{F}(E)$. Moreover, said product is continuous with respect to the topology on $\mathcal{F}(E)$. $\mathcal{F}(E)$ is also naturally endowed with an involution ${\cdot}^{\ast}\colon \mathcal{F}(E)\to \mathcal{F}(E)$,
\be
\label{Eq:4.8}
\{F_p\}_{p\in\mathbb{N}}\mapsto \left\{F_p^\ast\right\}_{p\in\mathbb{N}}, \qquad (F_p^\ast)(u_p) \deq \overline{F_p(u_p^\ast)}\ .
\ee

One can then consider derivatives of fermionic functionals in the following way: given $F_p\in\mathcal{F}^p(E)$, $p\ge 1$, the left derivative of $F_p$ in the direction $h\in \pazocal{E}(E)$ is defined on $\wedge^{p-1}\pazocal{E}(E)$ as 
\[
d_h F_p(u) \deq F_p(h\wedge u)
\]
and is then extended by continuity to $\pazocal{E}^a(M^{p-1}, E^{\boxtimes p-1})$, thus yielding a linear and continuous map $d_h F_p \colon \pazocal{E}^a(M^{p-1}, E^{\boxtimes p-1})\to \mathbb{C}$, i.e. $d_h F_p\in \mathcal{F}^{p-1}(E)$. One can then extend the map $d_h \colon \mathcal{F}^{p}(E)\to\mathcal{F}^{p-1}(E)$ to the whole algebra $\mathcal{F}(E)$ by considering
\[
d_h \colon F = \{F_p\}_{p\in\mathbb{N}} \mapsto d_h F \deq \{d_h F_p\}_{p\in\mathbb{N}}\ .
\]
It is easy to see that for any $h\in\pazocal{E}(E)$, $d_h$ is a graded derivation. One can also consider higher order derivatives by iterating the left derivative: given $F_p\in\mathcal{F}^p(E)$, $k\le p$ and $h_1, \dots, h_k \in\pazocal{E}(E)$, we define for $u\in \wedge^{p-k} \pazocal{E}(E)$
\[
d^{k}_{h_1, \dots, h_k} F_p(u) = F_p(h_k \wedge \cdots \wedge h_1 \wedge u)
\]
and then proceed by continuity as before, obtaining for each $F\in \mathcal{F}(E)$ a jointly continuous map
\[
d^k F \colon \pazocal{E}(E)^k \times \mathcal{C}(E) \to \mathbb{C} 
\]
which is easily seen to be multilinear and alternating in the first $k$ entries, that is, equivalently,  a continuous map
\[
F^{(k)}\colon \pazocal{E}^a(M^k, E^{\boxtimes k}) \times \mathcal{C}(E) \to \mathbb{C}
\]
which is linear in the first entry. In particular, notice that these can be considered as an $\mathcal{F}(E)$-valued (compactly supported) distributional section of ${E^\ast}^{\boxtimes k}\to M^k$, that is, an object of $\pazocal{D}'(M^k, E^{\boxtimes k}) \widehat{\otimes}_\pi \mathcal{F}(E)$, where $\widehat{\otimes}_\pi$ denotes the completion of the tensor product in the projective topology\footnote{As both spaces are nuclear, the completion is independent on the chosen topology; we choose the projective topology just to fix one.}.

In order to proceed with the quantization, one needs to restrict the $\ast$-algebra of fermionic functionals in order to endow it with a suitable $\star$-product. To do so, one needs to be able to control the wavefront set of derivatives of the relevant functionals; in particular, we consider the set of \emph{microcausal} fermionic functionals $\mathcal{A}(E)\subseteq \mathcal{F}(E)$ consisting of those functionals $F\in\mathcal{F}(E)$ such that
\[
\text{WF}(F^{(n)}_{u})\subseteq \Xi_n \deq T^\ast \overset{\cdot}{M^n} \setminus \bigcup_{(p_1, \dots, p_n)\in M^n} \left(\overline{V_{p_1}^+} \times \cdots \times \overline{V_{p_n}^+}\right) \cup  \left(\overline{V_{p_1}^-} \times \cdots \times \overline{V_{p_n}^-}\right) \ \mathrm{for \ every} \ u\in\mathcal{C}(E)\ .
\]
We endow this space with the initial locally convex topology induced by the family of linear maps $\{\ell_{k, u}\}$,
\[
F \overset{\ell_{k, u}}{\longmapsto} \begin{dcases} \braket{F, u}\in\mathbb{C}\, & k=0\\
F^{(k)}_u\in{\pazocal{E}_{\Xi_k}^a}^\prime(M^k, E^{\boxtimes k})\, & k\ge 1
\end{dcases}
\]
indexed by an integer $k\in\mathbb{N}$ and a function in the configuration space $u\in\mathcal{C}(E)$, and where ${\pazocal{E}_{\Xi_k}^a}^\prime(M^k, E^{\boxtimes k})$ denotes the inductive limit
\[
\varinjlim{\pazocal{E}_{\Gamma_{k,n}}^a}^\prime(M^k, E^{\boxtimes k})
\]
with $\{\Gamma_{k,n}\}_{n\in\mathbb{N}}$ a sequence of closed cones in $T^\ast M^k$ such that $\Gamma_{k,n} \subset \overset{\circ}{\Gamma}_{k,n+1}$ and $\cup_n \Gamma_{k,n} = \Xi_k$, and  where ${\pazocal{E}_{\Gamma_{k,n}}^a}^\prime(M^k, E^{\boxtimes k})$ is endowed with the usual H\"ormander topology. Using the causal propagator $S\in\pazocal{D}'(E^{\boxtimes 2}, M^2)$ one is then able to endow $\mathcal{A}(E)$ with a Peierls' bracket. First of all, given any homogeneous functional $F\in\mathcal{A}^p(E)$, we define the object $(S \ast F^{(1)})$,
\be
\label{Eq:4.4}
(S \ast F^{(1)})(u) \deq \int_{M} ({S(x,y), F^{(1)}_u(y)})_h\, d\mu_g(y)
\ee
which is well-defined thanks to the wavefront set properties of both the causal propagator $S$ and of $F^{(1)}_u$, as
\[
\mathrm{WF}(S) = \left\{(x,y,\xi_x, -\xi_y)\in T^\ast M^2 \setminus z(M^2) \ | \ (x, \xi_x) \sim (y, \xi_y) \right\}\ .
\]
Moreover, notice that this object is actually a smooth function: indeed, using \cite[Theorem 8.2.13]{Hormander1998} one obtains that
\[
\mathrm{WF}((S \ast F^{(1)}_u))\subseteq \mathrm{WF}_M(S) \cup \mathrm{WF}'(S)\circ \mathrm{WF}(F^{(1)}_u) = \varnothing \ \mathrm{for \ all} \ u\in \pazocal{E}^a(M^p, E^{\boxtimes p})\ .
\]
Therefore, we can consider it as an object in $\pazocal{E}(E^\ast)\,{\simeq\ } \pazocal{E}(E)$, where the isomorphism is due to the existence of the symmetric bilinear metric $h$. We can then compute, for any other homogeneous functional $G\in\mathcal{A}^q(E)$, the object
\[
G^{(1)} \wedge (S \ast F^{(1)})\in \mathcal{A}^{p+q-2}(E)
\]
which is defined on an element $u_{1}\wedge \cdots \wedge u_{p+q-2}$
\[
\begin{split}
G^{(1)}&\wedge (S \ast F^{(1)})(u_1 \wedge \cdots \wedge u_{p+q-2})\\  &\deq  (-1)^{q+1}\sum_{\sigma\in S_{p+q-2}}\mathrm{sgn}(\sigma) G\left((S \ast F^{(1)}_{u_{\sigma(q)}\wedge \cdots \wedge u_{\sigma(q+p-2)}}) \wedge u_{\sigma(1)} \wedge \cdots \wedge u_{\sigma(q-1)}\right)
\end{split}
\]
and then extended to the whole of $\pazocal{E}^a(M^{p+q-2}, E^{\boxtimes p+q-2})$ by continuity. This procedure is then extended to non-homogeneous functionals by considering $G^{(1)} \wedge (S \ast F^{(1)}) = \left\{\left(G^{(1)} \wedge (S \ast F^{(1)})\right)_p\right\}_{p\in\mathbb{N}}$,
\be
\label{Eq:4.5}
\left(G^{(1)} \wedge (S \ast F^{(1)})\right)_p(u) \deq \sum_{k=0}^{p} \frac{1}{k!(p-k)!} \left((G^{(1)})_k \wedge \left(S\ast (F^{(1)})_{p-k}\right)\right)(u), \qquad u\in \pazocal{E}^a(M^p, E^{\boxtimes p}) \ .
\ee
Then the Peierls' bracket is given by $\{F, G\}_S \deq G^{(1)} \wedge (S \ast F^{(1)})$; notice that on homogeneous elements $F\in\mathcal{A}^p(E)$, $G\in\mathcal{A}^q(E)$ we have the desired graded anticommutativity
\[
\left\{G,F\right\}_S = -(-1)^{qp}\left\{F,G\right\}_S
\]
as well as the graded Jacobi identity.
\subsection{Deformation quantization and the classical M\o ller maps}
Starting from $\mathcal{A}(E)$, the quantization proceeds in the following way: we consider the $\ast$-algebra $\mathcal{A}(E)[\![\hbar]\!]$ of formal power series in $\hbar$ with coefficients in $\mathcal{A}(E)$, endowed with the product topology; clearly $\mathcal{A}(E)\subseteq \mathcal{A}(E)[\![\hbar]\!]$. One is then able to introduce a $\star$-product on $\mathcal{A}(E)[\![\hbar]\!]$, that is, a product such that given $F\in \mathcal{A}^p(E)$, $G\in\mathcal{A}^q(E)$
\[
F \star G = F \wedge G + o(\hbar) \qquad F \star G - (-1)^{pq}G \star F = i\hbar \left\{F, G\right\}_S+o(\hbar^2)
\]
Given a Hadamard bidistribution $\omega\in\pazocal{D}^\prime(M^2, E^{\boxtimes 2})$ (see Definition \ref{Def:3.3}) and two functionals $F, G\in\mathcal{A}(E)$, we consider the fermionic functional
\[
\begin{split}
\Gamma_\omega^n(G,F)& \deq \left(\frac{i}{2}\right)^n G^{(n)} \wedge \left(\omega^{\boxtimes n} \ast F^{(n)}\right) \ \forall \ n\in\mathbb{N}\ .
\end{split}
\]
where the quantity on the right is defined component-wise as in (\ref{Eq:4.5}) and $\omega^{\boxtimes n} \ast F^{(n)}$ is defined as in (\ref{Eq:4.4}). Notice that due to the wavefront set properties of both $F^{(n)}$ and $\omega^{\boxtimes n}$ the object above is well-defined. Then we define
\be
\label{Eq:4.6}
G \star F \deq \sum_{n\in\mathbb{N}} \hbar^n \Gamma_\omega^n (G,F)\in\mathcal{A}(E)[\![\hbar]\!]\ .
\ee
This product can then be easily extended to the whole topological $\ast$-algebra of formal power series. Now, let $\mathcal{A}(S)[\![\hbar]\!]$ and $\mathcal{A}(S_G)[\![\hbar]\!]$ be the topological $\ast$-algebras of microcausal fermionic functionals associated to the free and charged Dirac fields. Suppose that the $\star_G$-product on $\mathcal{A}(S_G)[\![\hbar]\!]$ is constructed using the Hadamard bidistribution induced by the one used to define $\star$-product on $\mathcal{A}(S)[\![\hbar]\!]$, as illustrated in Proposition \ref{Prop:3.4}. We  define the \emph{classical M{\o}ller map} $\mathscr{R}_A \colon \mathcal{A}(S_G)[\![\hbar]\!]\to \mathcal{A}(S)[\![\hbar]\!]$ by considering the pullbacks induced by $\mathcal{R}_A^{\wedge p}\colon \pazocal{E}^a(M^p, S^\oplus(M)^{\boxtimes p})\to \pazocal{E}^a(M^p, S_G^\oplus(M)^{\boxtimes p})$. Now,
\begin{theorem}
\label{Th:3.5}
 $\mathscr{R}_A$ is a well-defined $\ast$-isomorphism, algebraically and topologically.
\end{theorem}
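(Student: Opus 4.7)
The plan is to define $\mathscr{R}_A$ concretely and then verify each of the four assertions (well-definedness on microcausal functionals, $\ast$-preservation, algebra morphism, topological homeomorphism) in turn. Explicitly, for $F = \{F_p\}_{p \in \mathbb{N}} \in \mathcal{A}(S_G)[\![\hbar]\!]$ and $u \in \pazocal{E}^a(M^p, S^\oplus(M)^{\boxtimes p})$, set
\[
(\mathscr{R}_A F)_p(u) \deq F_p\bigl(\mathcal{R}_A^{\wedge p} u\bigr),
\]
extended coefficient-wise in $\hbar$. Well-definedness as an element of $\mathcal{F}(S^\oplus)$ is immediate: by Lemma \ref{Lem:2.2} applied to both $R_A$ and its conjugate $\overline{R}_A$, the map $\mathcal{R}_A = R_A \oplus \overline{R}_A$ is Fr\'echet-continuous on configuration space, and its antisymmetric tensor powers $\mathcal{R}_A^{\wedge p}$ extend by continuity from $\wedge^p \pazocal{E}(S^\oplus)$ to $\pazocal{E}^a(M^p, S^\oplus(M)^{\boxtimes p})$. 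Bijectivity at this level is then forced by Remark \ref{Rem:2.6}: the pullback by $\mathcal{R}_A^{-1} = \mathcal{p}^\oplus + S_-^\oplus \circ \mathcal{p}^\oplus \circ A^\oplus$ provides a two-sided inverse.

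The substantive step is to check that pullback preserves the microcausal wave front set condition. A direct chain-rule computation yields
\[
(\mathscr{R}_A F)^{(n)}_u = F^{(n)}_{\mathcal{R}_A^{\wedge \bullet} u} \circ \mathcal{R}_A^{\otimes n}
\]
on each homogeneous component, where $\mathcal{R}_A^{\otimes n}$ is the tensor power acting on the first $n$ arguments. From Theorem \ref{Th:2.4}, the Schwartz kernel of $\mathcal{R}_A$ decomposes as a pointwise (smooth) piece $\mathcal{i}^\oplus$ plus the Green-operator contribution ${S_-^G}^\oplus \circ A^\oplus \circ \mathcal{i}^\oplus$, whose wave front set is contained in the union of the conormal of the diagonal with future-directed null bicharacteristic pairs. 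Applying H\"ormander's composition theorem \cite[Theorem 8.2.13]{Hormander1998}, together with the causal-orientation preservation of $\mathrm{WF}'({S_-^G}^\oplus)$, yields $\mathrm{WF}'(\mathcal{R}_A^{\otimes n}) \circ \Xi_n \subseteq \Xi_n$, so that $(\mathscr{R}_A F)^{(n)}_u \in \pazocal{E}'_{\Xi_n}(M^n, S^\oplus(M)^{\boxtimes n})$. This is the principal technical obstacle, since one must verify that the forbidden cones $\overline{V^\pm_{p_1}} \times \cdots \times \overline{V^\pm_{p_n}}$ are avoided after propagation through ${S_-^G}^\oplus$; the same argument applied to $\mathcal{R}_A^{-1}$ handles the reverse direction.

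Algebra compatibility then follows structurally. The identity $\mathcal{R}_A^{\wedge (p+q)}(u \wedge v) = \mathcal{R}_A^{\wedge p}(u) \wedge \mathcal{R}_A^{\wedge q}(v)$ gives preservation of the classical wedge product at $\hbar = 0$. The key input for the deformed product is Proposition \ref{Prop:3.4}: by construction $\omega_G^{\boxtimes n}$ is the pullback of $\omega^{\boxtimes n}$ through $(\mathcal{R}_A^\ast)^{\otimes 2n}$, hence a term-by-term computation of (\ref{Eq:4.6}) yields
\[
\Gamma_{\omega_G}^n(G, F)\bigl(\mathcal{R}_A^{\wedge \bullet} u\bigr) = \Gamma_\omega^n\bigl(\mathscr{R}_A G, \mathscr{R}_A F\bigr)(u)
\]
for every $n$, and summing in $\hbar$ gives $\mathscr{R}_A(F \star_G G) = \mathscr{R}_A(F) \star \mathscr{R}_A(G)$. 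Compatibility with the involution (\ref{Eq:4.8}) is automatic because $\mathcal{R}_A = R_A \oplus \overline{R}_A$ intertwines the fibrewise conjugations $C$, $C_G$, so $u \mapsto u^\ast$ commutes with $\mathcal{R}_A^{\wedge p}$. For the topological statement, each defining seminorm $\ell_{k, u}$ on $\mathcal{A}(S^\oplus)$ pulls back under $\mathscr{R}_A$ to a seminorm of the form $\ell_{k, \mathcal{R}_A^{\wedge \bullet} u}$ on $\mathcal{A}(S_G^\oplus)$ composed with continuous kernel-composition in the H\"ormander topology on $\pazocal{E}'_{\Gamma_{k, n}}$; the same applies to $\mathscr{R}_A^{-1}$, giving bicontinuity and completing the proof.
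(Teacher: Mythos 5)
Your proposal follows essentially the same route as the paper: well-definedness via the wave-front-set analysis of the Schwartz kernel of $\mathcal{R}_A = \mathcal{i}^\oplus - {S_-^G}^\oplus \circ A^\oplus \circ \mathcal{i}^\oplus$ together with H\"ormander's composition theorems, bijectivity from $\widehat{\mathscr{R}}_A$ of Remark~\ref{Rem:2.6}, the $\star$-product compatibility from the identity $\Gamma^n_\omega(\mathscr{R}_A G, \mathscr{R}_A F) = \mathscr{R}_A\bigl(\Gamma^n_{\omega_G}(G,F)\bigr)$ based on Proposition~\ref{Prop:3.4}, the intertwining of conjugations for the $\ast$-structure, and a pullback-of-seminorms argument for continuity. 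The only place you compress the paper's argument is the microcausality check: the paper carries out an explicit case analysis on $\mathrm{WF}(\mathcal{R}_A^{\wedge n})$ (splitting over the index set $I$ and the diagonal versus Green-operator contributions to $\mathrm{WF}(\mathcal{R}_A)$) to establish by contradiction that the forbidden cones are avoided, and also verifies separately that $\mathrm{WF}(\mathcal{R}_A^{\wedge n})_{M^n}$ is empty so that the self-composition term drops out — a step your inclusion $\mathrm{WF}'(\mathcal{R}_A^{\otimes n}) \circ \Xi_n \subseteq \Xi_n$ tacitly assumes but does not isolate.
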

\begin{proof}
First of all, we need to show that given a functional $F\in \mathcal{A}(S_G)\subseteq\mathcal{A}(S_G) [\![\hbar]\!]$, $\mathscr{R}_A(F)$ is a well-defined functional in $\mathcal{A}(S)[\![\hbar]\!]$; that is, we need to show that 
\be
\label{Eq:3.4}
\text{WF}\left((\mathscr{R}_A(F))_u^{(n)}\right)\subseteq \Xi_n \ \mathrm{for \ all}\ u\in\mathcal{C}(S^\oplus(M))\ \mathrm{and} \ n\in\mathbb{N}.
\ee
To do so, we need to compute the wavefront set of the classical M{\o}ller map on field configurations $\mathcal{R}_A$, restricted to a continuous map $\mathcal{R}_A \colon \pazocal{D}(S^\oplus(M))\to \pazocal{D}'({S_G^\oplus(M)})$. This is defined as the wavefront set of the distribution $r_A$, where $r_A\in\pazocal{D}'(S^\oplus(M) \boxtimes {S_G^\oplus(M)})$ is obtained thanks to Schwartz's kernel theorem and satisfies
\[
\int_M \left(\mathcal{R}_A u, v\right)_G\, d\mu_g = r_A(u,v) \ \text{for every} \ u\in\pazocal{D}(S^\oplus (M)), v\in \pazocal{D}({S_G^\oplus(M)})\ .
\]
To compute its wavefront set, we proceed locally, as presented in \cite{Hormander1998}. Namely, let $\{e_i\}_{1\le i \le \text{rank}(S_G(M))}$ be a local\footnote{Notice that as we are supposing that the principal bundle $P_G$ is trivial, there exist \emph{global} frames for both $S(M)$ and $S_G(M)$.} frame for $S_G(M)$ and $\{f_j\}_{1\le j \le \text{rank}(S(M))}$ a local frame for $S(M)$; a local frame for $S^\oplus(M)$ and $S_G^\oplus(M)$ is then given by
\be
\label{Eq:4.7}
\{f_1, \dots, f_{\mathrm{rank}(S(M))}, \overline{f}_1, \dots, \overline{f}_{\mathrm{rank}(S(M))}\} \qquad \{e_1, \dots, e_{\mathrm{rank}(S_G(M))}, \overline{e}_1, \dots, \overline{e}_{\mathrm{rank}(S_G(M))}\}
\ee
respectively. Using these, we can write locally
\[
r_A = {r_A}_{ji} {f}^j \boxtimes e^i
\]
with ${r_A}_{ij}\in \pazocal{D}'(M^2)$ and where $\{e^i\}_{i}$ and $\{f^j\}_{j}$ denote the dual frames of $S_G^{\oplus}(M)^\ast \simeq \overline{S_G(M)}\oplus S_G(M)$ and $S^\oplus(M)^\ast \simeq \overline{S(M)}\oplus S(M)$ with respect to those in (\ref{Eq:4.7}). Then
\[
\text{WF}(\mathcal{R}_A) \doteq \text{WF}(r_A)=\bigcup_{\substack{1\le i \le 2\text{rank}(S_G(M)) \\ 1\le j \le 2\text{rank}(S(M))}} \text{WF}\left({r_A}_{ji}\right)
\]
where $\text{WF}({r_A}_{ji})$ is given locally by
\[
{\widehat{\phi}_\alpha}^{-1}(\text{WF}(\phi_\alpha^\ast {r_A}_{ji}))
\]
with $\widehat{\phi}_\alpha \colon T^\ast M_{U_\alpha}^2\to U_\alpha^2 \times \mathbb{R}^{\text{dim}(M^2)}$. Moreover, due to the properties of the wavefront set, we can study $\text{WF}(\mathcal{R}_A)$ by studying separately the two terms
\[
\mathcal{i}^\oplus \ \mathrm{and} \ S_-^\oplus\circ A^\oplus \circ \mathcal{i}^\oplus\ .
\]
Let us thus first consider the map $\mathcal{i}^\oplus\colon \pazocal{D}(S^\oplus(M))\to \pazocal{D}'({S_G^\oplus(M)})$; then we have that
\[
\begin{split}
\int_M \left(\mathcal{i}^\oplus u, v\right)_G\, d\mu_g & = \int_M \left(iu_1, v_2\right)_G + \left(v_1, \overline{i}u_2\right)_G\, d\mu_g \\ & = \int_M \left(u_1^\sharp(\sigma(p)), v_2^{\sharp_G}(\tilde{\sigma}(p))\right)_V\, d\mu_g
 +\int_M \left(v_1^{\sharp_G}(\tilde{\sigma}(p)), u_2^{\sharp}(\sigma(p))\right)_V\, d\mu_g \\ &= \int_M {u_1^\sharp}^j(\sigma(p))h_{ji}{v_2^{\sharp_G}}^i(\tilde{\sigma}(p))\, d\mu_g 
 + \int_M {v_1^{\sharp_G}}^i(\tilde{\sigma}(p))h_{ij} {u_2^\sharp}^j(\sigma(p))\, d\mu_g
\end{split}
\]
that is,
\[
i_{ji}=\begin{dcases}
\delta_\Delta h_{ji}\, & j \le \mathrm{rank}(S(M)), i \le \mathrm{rank}(S_G(M)) \\
\delta_\Delta \overline{h}_{(j-\mathrm{rank}(S(M)))(i-\mathrm{rank}(S_G(M))}\, & j > \mathrm{rank}(S(M)), i > \mathrm{rank}(S_G(M))
\end{dcases}\ .
\]
Therefore,
\[
\text{WF}(\mathcal{i}^\oplus) = \left\{(x,x, k, -k) \in T^\ast M^2 \setminus z(M^2)\right\}\ .
\]
As far as the map $S_-^\oplus \circ A^\oplus \circ \mathcal{i}^\oplus$ is concerned, we just need to compute the wavefront set of $A^\oplus$. In particular, proceeding as before and using our preferred gauge $\tilde{\sigma}$ we have
\[
\begin{split}
&\int_M (A u, v)_G\, d\mu_g = \int_M \left(i(g^{ij}(p)e_{j}(p)) \cdot_V \left({\rho_G}_\ast\left(\left(\sigma_G^\ast \omega^G\right)_p(e_i(p))\right)u_1^{\sharp_G}(\tilde{\sigma}(p))\right),v_2^{\sharp_G}(\tilde{\sigma}(p))\right)_V\, d\mu_g  \\
& + \int_M \left(v_1^{\sharp_G}(\tilde{\sigma}(p)),\overline{i(g^{ij}(p)e_{j}(p))} \cdot_V \left(\overline{{\rho_G}_\ast\left(\left(\sigma_G^\ast \omega^G\right)_p(e_i(p))\right)}u_2^{\sharp_G}(\tilde{\sigma}(p))\right)\right)_V\, d\mu_g \\
& = \int_M  \left(i(g^{ij}(p)e_{j}(p)) \cdot_V \left({\rho_G}_\ast\left(\left(\sigma_G^\ast \omega^G\right)_p(e_i(p))\right)\right)\right)^l_j {u_1^{\sharp}}^j(\tilde{\sigma}(p)) h_{li}{v_2^{\sharp_G}}^i(\tilde{\sigma}(p))\, d\mu_g \\
&\quad + \int_M   \left(\overline{i(g^{ij}(p)e_{j}(p)) \cdot_V \left({\rho_G}_\ast\left(\left(\sigma_G^\ast \omega^G\right)_p(e_i(p))\right)\right)}\right)^l_i {v_1^{\sharp_G}}^i(\tilde{\sigma}(p)){u_2^{\sharp_G}}^j(\tilde{\sigma}(p))h_{lj}\ .
\end{split}
\]
Thus
\[
A_{ji} = \begin{dcases}
\delta_\Delta \left(i(g^{ij}(p)e_{j}(p)) \cdot_V \left({\rho_G}_\ast\left(\left(\sigma_G^\ast \omega^G\right)_p(e_i(p))\right)\right)\right)^l_j h_{li}\, & j, i \le \mathrm{rank}(S_G(M)) \\
\delta_\Delta \left(\overline{i(g^{ij}(p)e_{j}(p)) \cdot_V \left({\rho_G}_\ast\left(\left(\sigma_G^\ast \omega^G\right)_p(e_i(p))\right)\right)}\right)^l_{i-\mathrm{rank}(S_G(M))} \overline{h}_{(j-\mathrm{rank}(S(M))l}\, & j, i > \mathrm{rank}(S_G(M))\ .
\end{dcases}
\]
Using the fact that $\sigma_G^\ast \omega^G$ is assumed to be compactly supported in $\mathrm{supp}(\mathscr{A})$, we then have that
\[
\begin{split}
\text{WF}(A^\oplus)& \subseteq \left(\pi^{-1}_{T^\ast M^2}(\mathrm{supp}(\mathscr{A}) \times M)\setminus z(M^2)\right) \cap \left\{(x,x, k, -k) \in T^\ast M^2 \setminus z(M^2)\right\} \\
& = \left\{(x, x, k, -k)\in T^\ast M^2 \setminus z(M^2) \ | \ x\in \mathrm{supp}(\mathscr{A})\right\}\ .
\end{split}
\]
We are now in a position to compute $\text{WF}(\mathcal{R}_A)$. First of all, notice that the composition $A^\oplus \circ \mathcal{i^\oplus}$ is, as we expected, well-defined: indeed, as $\supp(\mathcal{i}^\oplus) \subseteq \Delta$, we have that $\supp(\mathcal{i}^\oplus) \ni (x,y)\mapsto y$ is proper, and moreover
\[
\begin{split}
   \text{WF}'(\mathcal{i}^\oplus)_M & = \left\{(y, \xi_y) \ | \  (x,y,0, -\xi_y)\in \text{WF}(\mathcal{i}^\oplus)\right\} = \varnothing \\
   \text{WF}(A^\oplus)_M & = \left\{(x, \xi_x) \ | \  (x,y,\xi_x, 0) \in \text{WF}(A^\oplus)\right\} = \varnothing\ . 
\end{split}
\]
Thus $\text{WF}'(\mathcal{i}^\oplus)_M \cap \text{WF}(A^\oplus)_M = \varnothing$, and \cite[Theorem 8.2.14]{Hormander1998} gives us the well-posedness of $A^\oplus \circ \mathcal{i}^\oplus$ as well as
\[
\text{WF}(A^\oplus \circ \mathcal{i}^\oplus) \subseteq \left\{(x, x, k, -k) \in T^\ast M^2 \setminus z(M^2)\ | \ x \in \mathrm{supp}(\mathscr{A})\right\}\ .
\]
Let us now consider ${S_-^G}^\oplus \circ A^\oplus \circ \mathcal{i}^\oplus$; thanks to \cite{Radzikowski1996} we know that 
\[
\text{WF}({S_-^G}^\oplus) = \left\{(x,y,\xi_x, -\xi_y) \in T^\ast M^2 \setminus z(M^2)\  |\   x\in J^+(y), (x,\xi_x) \sim (y, \xi_y) \ \mathrm{or} \ x=y, \xi_x = \xi_y\right\}\ .
\]
As evidently $\text{WF}'(A^\oplus \circ \mathcal{i}^\oplus)_M = \varnothing$ and $\supp(A^\oplus \circ \mathcal{i}^\oplus)\ni (x,y) \mapsto y$ is proper, we can then apply again \cite[Theorem 8.2.14]{Hormander1998} and we have that $\text{WF}'({S_-^G}^\oplus \circ A^\oplus \circ \mathcal{i}^\oplus)$ is contained in
\[
\text{WF}'({S_-^G}^\oplus) \circ \text{WF}'(A^\oplus \circ \mathcal{i}^\oplus) \cup \left(\text{WF}({S_-^G}^\oplus)_M \times M \times \{0\}\right) \cup \left(M \times \{0\} \times \text{WF}'(A^\oplus \circ i^\oplus)_M\right)\ .
\]
The first set is given by
\[
\left\{(x, z, \xi_x, \xi_z) \ | \ \exists(y, \xi_y) \ \text{s.t.} \ (x,y, \xi_x, -\xi_y)\in \text{WF}({S_-^G}^\oplus), \ (y,z, \xi_y, -\xi_z)\in \text{WF}(A^\oplus \circ \mathcal{i}^\oplus) \right\}\ .
\]
Using the properties of $\text{WF}(A^\oplus \circ \mathcal{i}^\oplus)$ we conclude that the previous set is given by
\[
\left\{ (x,z,\xi_x, \xi_z) \ | \ (x, z, \xi_x, -\xi_z) \in \text{WF}(S_-^G), \ z\in \mathrm{supp}(\mathscr{A})\right\}\ .
\]
The second one and the third one are easily seen to be empty; therefore
\[
\text{WF}({S_-^G}^\oplus \circ A^\oplus \circ \mathcal{i}^\oplus) \subseteq \left\{(x,z,\xi_x, -\xi_z) \ | \ (x, z, \xi_x, -\xi_z)\in \text{WF}({S_-^G}^\oplus), \ z\in\mathrm{supp}(\mathscr{A})\right\}\ .
\]
Therefore 
\[
\text{WF}(\mathcal{R}_A) \subseteq \left\{(x,z,\xi_x, -\xi_z) \ | \ (x, z, \xi_x, -\xi_z)\in \text{WF}({S_-^G}^\oplus), \ z\in\mathrm{supp}(\mathscr{A})\right\} \cup \left\{(x,x, k, -k) \in T^\ast M^2 \setminus z(M^2)\right\}\ .
\]
As the map $(\mathscr{R}_A(F))_u^{(n)}$ involves the composition of $F^{(n)}_{\mathcal{R}_A u}$ with the map ${\mathcal{R}_A}^{\wedge n}$, the last step consists in computing $\text{WF}(\mathcal{R}_A^{\wedge n})$. This, according to \cite[Theorem 8.2.9]{Hormander1998}, is a subset of
\[
\begin{split}
\Big\{ & (x_1, y_1, x_2, y_2, \dots, x_n, y_n, \xi_1, \eta_1, \xi_2, \eta_2, \dots, \xi_n, \eta_n) \ | \ \exists I \subseteq \{1, \dots, n\}, I \ne \varnothing  \ \text{s.t.} \\
& (x_i, y_i, \xi_i, \eta_i) \in \text{WF}(\mathcal{R}_A) \ \forall i \in I\ \text{and}\ (x_j, y_j, \xi_j, \eta_j) = (x_j, y_j, 0,0)  \\
& \text{with} \ (x_j, y_j)\in \supp(r_A) \ \forall j \in \{1, \dots, n\}\setminus I \Big\}\ .
\end{split}
\]
Finally, we are able to check whether $\mathscr{R}_A(F)$ is well-defined. First of all, notice that $(\mathscr{R}_A(F))_u^{(n)} = F_{\mathcal{R}_A u}^{(n)}\circ {\mathcal{R}_A}^{\wedge n}$ is well-defined: indeed, thanks to \cite[Theorem 8.2.13]{Hormander1998} we know that the composition is well-defined if $\text{WF}(F^{(n)}_{\mathcal{R}_A u}) \cap \text{WF}'({\mathcal{R}_A}^{\wedge n})_{M^n}= \varnothing$; but
\[
\left\{(x_1, \dots, x_n, \xi_1, \dots, \xi_n) \ | \ (x_1, y_1, \dots, x_n, y_n, -\xi_1, 0, \dots, -\xi_n, 0) \in \text{WF}({\mathcal{R}_A}^{\wedge n})\right\} = \varnothing\ .
\]
We also infer that
\[
\text{WF}\left((\mathscr{R}_A(F))_u^{(n)}\right) \subseteq \text{WF}({\mathcal{R}_A}^{\wedge n})_{M^n} \cup \text{WF}'({\mathcal{R}_A}^{\wedge n}) \circ \text{WF}(F^{(n)}_{\mathcal{R}_A u})
\]
i.e.
\[
\begin{split}
\text{WF}\left((\mathscr{R}_A(F))_u^{(n)}\right) \subseteq \Big\{& (x_1, \dots, x_n, \xi_1, \dots, \xi_n) \ | \ (x_1, y_1, \dots, x_n, y_n, \xi_1, -\eta_1, \dots, \xi_n, -\eta_n) \in \text{WF}(\mathcal{R}_A^{\wedge n})\\
& \text{for some} \ (y_1, \dots, y_n, \eta_1, \dots, \eta_n)\in\text{WF}(F_{\mathcal{R}_A u}^{(n)})\Big\}\ .
\end{split}
\]
Using this, we shall prove (\ref{Eq:3.4}) by contradiction. Assume then that $(x_1, \dots, x_n, \xi_1, \dots, \xi_n)\in \overline{V_{x_1}^+} \times \cdots \times \overline{V_{x_n}^+}$; then there exists $(y_1, \dots, y_n, \eta_1, \dots, \eta_n)\in \text{WF}(F_{\mathcal{R}_A u}^{(n)})$ such that
\[
(x_1, y_1, \dots, x_n, y_n, \xi_1, -\eta_1, \dots, \xi_n, -\eta_n)\in \text{WF}({\mathcal{R}_A}^{\wedge n})\ .
\]
If $i\in I$, then $(x_i, y_i, \xi_i, -\eta_i) \in \text{WF}({\mathcal{R}_A})$, that is, either 
\[
(x_i, y_i, \xi_i, -\eta_i)\in \text{WF}({S_-^G}^\oplus) \  \text{with} \ y_i\in\mathrm{supp}(\mathscr{A})
\]
or
\[
x_i = y_i \qquad \eta_i =\xi_i\ .
\]
In the first case, we would then need to have $\eta_i\in \overline{V^+_{y_i}}$, as $\eta_i$ is the cotangent vector to a future-directed lightlike geodesic, while in the second one the same result follows from the equality $\eta_i = \xi_i$. If $i\notin I$, then
\[
(x_i, y_i, \xi_i, \eta_i) = (x_i, y_i, 0, 0)
\]
i.e. $\eta_i = 0 \in \overline{V^+_{y_i}}$. Thus, we conclude that $(y_1, \dots, y_n, \eta_1, \dots, \eta_n)\in \overline{V^+_{y_1}} \times \cdots \overline{V^+_{y_n}}$; but we reached a contradiction, as this is not possible by the definition of $F$. The case $(x_1, \dots, x_n, \xi_1, \dots, \xi_n)\in  \overline{V_{x_1}^-} \times \cdots \times \overline{V_{x_n}^-}$ leads to a similar conclusion: indeed, if $i\in I$ then $(x_i, y_i, \xi_i, -\eta_i) \in \text{WF}(\mathcal{R}_A)$, which as before entails that either
\[
(x_i, y_i, \xi_i, -\eta_i)\in \text{WF}({S_-^G}^\oplus) \ \text{with} \ y_i \in \mathrm{supp}(\mathscr{A})
\]
or
\[
x_i = y_i \qquad \xi_i = \eta_i\ .
\]
The first case is not possible, as we require $(x_i, \xi_i) \sim (y_i, \eta_i)$ which is not possible if $\xi_i$ is past-directed; thus from the second case we infer that $(y_1, \dots, y_n, \eta_1, \dots, \eta_n)\in \overline{V^-_{y_1}} \times \cdots \overline{V^-_{y_n}}$, which is again a contradiction.

We thus have the well-posedness of the map $\mathscr{R}_A \colon \mathcal{A}(S_G)[\![\hbar]\!]\to \mathcal{A}(S)[\![\hbar]\!]$. Recall that $R_A$ admits an inverse which is explicitly given by (\ref{Eq:2.7}); it can be shown that by defining an analogous $\widehat{\mathscr{R}}_A \colon \mathcal{A}(S)[\![\hbar]\!] \to \mathcal{A}(S_G)[\![\hbar]\!]$ we reach the same conclusion, and that $\mathscr{R}_A \circ \widehat{\mathscr{R}}_A = \mathrm{id}_{\mathcal{A}(S)[\![\hbar]\!]}$ and $\widehat{\mathscr{R}}_A \circ \mathscr{R}_A = \mathrm{id}_{\mathcal{A}(S_G)[\![\hbar]\!]}$. Therefore, $\mathscr{R}_A$ is a vector space isomorphism.

The fact that $\mathscr{R}_A$ is an algebra homomorphism is due to the following fact: we know that the $\star$-product in $\mathcal{A}(S)[\![\hbar]\!]$ is given by (\ref{Eq:4.6}); given two homogeneous functionals $F\in\mathcal{A}^p(S)$ and $G\in\mathcal{A}^q(S)$, the functional $\Gamma^n_\omega(G,F)$ appearing in the sum can be formally written, on an homogeneous element $u_1\wedge \cdots \wedge u_{p+q-2n}\in \wedge^{p+q-2n} \pazocal{E}(S^\oplus(M))$ as
\[
\begin{split}
&\Gamma^n_\omega(G,F)(u_1 \wedge \cdots \wedge u_{p+q-2n}) = \left(\frac{i}{2}\right)^n\sum_{\sigma\in S_{p+q-2n}}\mathrm{sgn}(\sigma) \int_{M^{2n}}d\mu_g(x_1)d \mu_g(y_1) \cdots d\mu_g(x_n) d\mu_g(y_n) \\
& \qquad \left(F^{(n)}_{u_{\sigma(q-n+1)} \wedge \cdots \wedge u_{\sigma(p+q-2n)}}\right)_{f_1 \cdots f_n}(y_1, \dots, y_n)\left(G^{(n)}_{u_{\sigma(1)} \wedge \cdots \wedge U_{\sigma(q-n)}}\right)_{g_1 \cdots g_n}(x_1, \dots, x_n)\\
& \qquad \omega_{s_1 t_1}(x_1, y_1) \cdots \omega_{s_nt_n}(x_n, y_n)h^{f_1s_1}h^{t_1g_1}\cdots h^{f_ns_n}h^{t_ng_n}\ .
\end{split}
\]
We are interested in computing $\Gamma_\omega^n(\mathscr{R}_A(G), \mathscr{R}_A(F))$ with $F\in\mathcal{A}^p(S_G)$ and $G\in\mathcal{A}^q(S_G)$; this amounts to substituting the formal integral above with
\[
\begin{split}
& \int_{M^{2n}}d\mu_g(\xi_1)d\mu_g(\eta_1)\cdots d\mu_g(\xi_n)d\mu_g(\eta_n) d\mu_g(x_1)d \mu_g(y_1) \cdots d\mu_g(x_n) d\mu_g(y_n)\\
& \qquad \left(F^{(n)}_{{\mathcal{R}_A}^{\wedge p-n}(u_{\sigma(q-n+1)} \wedge \cdots \wedge u_{\sigma(p+q-2n)})}\right)_{l_1 \cdots l_n}(\xi_1, \dots, \xi_n)\left(G^{(n)}_{{\mathcal{R}_A}^{\wedge q-n}(u_{\sigma(1)} \wedge \cdots \wedge u_{\sigma(q-n)})}\right)_{k_1 \cdots k_n}(\eta_1, \dots, \eta_n)\\
& \qquad {r_A}_{f_1 u_1}(x_1, \xi_1) {r_A}_{g_1 v_1}(y_1, \eta_1) \cdots {r_A}_{f_n u_n}(x_n, \xi_n) {r_A}_{g_n v_n}(y_n, \eta_n) h^{l_1 u_1}h^{v_1 k_1} \cdots h^{l_n u_n}h^{v_n k_n}\\
& \qquad \omega_{s_1 t_1}(x_1, y_1) \cdots \omega_{s_nt_n}(x_n, y_n)h^{f_1s_1}h^{t_1g_1}\cdots h^{f_ns_n}h^{t_ng_n}\ .
\end{split}
\]
By performing a simple computation one can notice that 
\[
\begin{split}
 \int_{M^2}d\mu_g(x_i)d\mu_g(y_i) {{r}_A}_{f_i u_i}(x_i, \xi_i) &{{r}_A}_{g_i v_i}(y_i, \eta_i) \omega_{s_i t_i}(x_i, y_i) h^{f_i s_i} h^{t_i g_i} \\
&= \int_{M^2}d\mu_g(x_i)d\mu_g(y_i){{r}_A}_{ u_i f_i}^\ast(\xi_i,x_i) {{r}_A}_{v_i g_i}^\ast (\eta_i, y_i) \omega_{s_i t_i}(x_i, y_i) h^{f_i s_i} h^{t_i g_i}\\
 &= {\omega_G}_{u_i v_i }(\xi_i, \eta_i)
\end{split}
\]
and therefore we arrive at
\[
\begin{split}
& \int_{M^{2n}}d\mu_g(\xi_1)d\mu_g(\eta_1)\cdots d\mu_g(\xi_n)d\mu_g(\eta_n)\\
& \qquad \left(F^{(n)}_{{\mathcal{R}_A}^{\wedge p-n}(u_{\sigma(q-n+1)} \wedge \cdots \wedge u_{\sigma(p+q-2n)})}\right)_{l_1 \cdots l_n}(\xi_1, \dots, \xi_n)\left(G^{(n)}_{{\mathcal{R}_A}^{\wedge q-n}(u_{\sigma(1)} \wedge \cdots \wedge u_{\sigma(q-n)})}\right)_{k_1 \cdots k_n}(\eta_1, \dots, \eta_n)\\
& \qquad {\omega_G}_{u_1 v_1}(\xi_1, \eta_1) \cdots {\omega_G}_{u_n v_n}(\xi_n, \eta_n)h^{l_1 u_1}h^{v_1 k_1} \cdots h^{l_n u_n}h^{v_n k_n}
\end{split}
\]
which entails that
\[
\Gamma^n_\omega(\mathscr{R}_A(G), \mathscr{R}_A(F))(u_1 \wedge \cdots \wedge u_{p+q-2n}) = \mathscr{R}_A(\Gamma^n_{\omega_G}(G,F))(u_1 \wedge \cdots \wedge u_{p+q-2n})\ .
\]
The result can be extended to the whole $\pazocal{E}^a(M^{p+q-2n}, S^\oplus(M)^{\boxtimes p+q-2n})$ by continuity, as well as to the whole algebra $\mathcal{A}(S_G)$ and thus to the whole algebra of formal power series $\mathcal{A}(S_G)[\![\hbar]\!]$. Therefore,
\[
\mathscr{R}_A(F) \star \mathscr{R}_A(H) = \mathscr{R}_A(F \star_G H)
\]
and $\mathscr{R}_A$ is an algebra homomorphism as required. 

As far as the behaviour of the classical M{\o}ller map with respect to conjugation is concerned, let us recall that on $\mathcal{A}(S)[\![\hbar]\!]$ and $\mathcal{A}(S_G)[\![\hbar]\!]$ the conjugation map is given by the natural extension of (\ref{Eq:4.8}) (with the appropriate conjugation map, that is, with either $C\colon \pazocal{E}(\overline{S(M)})\to \pazocal{E}(S(M)) $ or $C_G \colon \pazocal{E}(\overline{S_G(M)})\to \pazocal{E}(S_G(M))$) to formal power series.

Given $F\in\mathcal{A}(S_G)\subseteq \mathcal{A}(S_G)[\![\hbar]\!]$ and $u\in\mathcal{C}(S^\oplus(M))$ we thus have that
\[
\begin{split}
(\mathscr{R}_A(F))^\ast (u) & = \overline{(\mathscr{R}_A(F)) (u^\ast)} = \sum_{p\in\mathbb{N}} \overline{\braket{F_p, {\mathcal{R}_A}^{\wedge p}u_p^\ast}}\ .
\end{split}
\]
Now, if $u_p = u_{i_1} \wedge \cdots \wedge u_{i_p}$, $u_{i_j}\in\pazocal{E}(S^\oplus(M))$ we have that
\[
{\mathcal{R}_A}^{\wedge p}(u_p^\ast) = \mathcal{R}_A u_{i_p}^\ast \wedge \cdots \wedge \mathcal{R}_A u_{i_1}^\ast = (\mathcal{R}_A u_{i_p})^\ast \wedge \cdots \wedge (\mathcal{R}_A u_{i_1})^\ast = \left({\mathcal{R}_A}^{\wedge p} (u_1 \wedge \cdots \wedge u_p)\right)^\ast \ .
\]
By continuity then it holds that
\[
\sum_{p\in\mathbb{N}} \overline{\braket{F_p, {\mathcal{R}_A}^{\wedge^p}u_p^\ast}} = \sum_{p\in\mathbb{N}} \overline{\braket{F_p, ({\mathcal{R}_A}^{\wedge p}u_p)^\ast }} = (\mathscr{R}_A(F^\ast))(u)\ .
\]
Therefore, $\mathscr{R}_A\colon \mathcal{A}(S_G(M))[[\hbar]]\to \mathcal{A}(S(M))[[\hbar]]$ is a well-defined algebraic $\ast$-isomorphism. For what concerns the topological part, it is simple to understand that the sequential completeness of H\"ormander topology is satisfied, since pointwise convergence holds term by term for the formal power series and that the wave front set condition is as well satisfied by the construction seen before.
\end{proof}

\section{Conclusions and outlook}
Our classic treatment of fermions in external backgrounds has proceeded with the aim at establishing the most general framework possible for the passage to the quantum case, having in mind perturbation theory hence in the language of formal power series. In doing so we have, however, taken shortcuts that is worth mentioning again. Two of them are particularly important: the first is that we have dealt with simply connected spacetimes, which rules out topological effects (\emph{i.e.}\ inequivalent spinor structures \cite{Isham78} and Aharonov-Bohm like effects, for which see, \emph{e.g.}, \cite{Dappiaggi2020,Vasselli2019}), the second is our simplifying assumption about the compactness of the support of the gauge potentials, which rules out Coulomb potentials. Both assumptions can be relaxed, but to keep this paper into a reasonable length we postpone any further discussion. However, notice that for the first problem, the passage to Fredenhagen's universal algebra \cite{Klaus} may help to solve the issue and that, as far as the second is concerned, it is exactly due to the compactness of the \emph{spatial} support of the potentials that one rules out secular effects in perturbation theory \cite{Pinamonti2023}. 

In this paper we have privileged the \emph{pointwise} treatment of the geometric structures (sections, potentials \emph{etc}.) to guarantee a detailed and unambiguous discussion of their features. In particular, our most interesting result has been to show how the \emph{classical} M\o ller maps are algebraic and topological isomorphisms of the charged and uncharged microcausal fermionic algebras, as formal power series. Here, the use of wave front sets was essential. As for a possible next step, once the appropriate interactions are introduced, one would construct the respective \emph{quantum} M\o ller maps (see, \emph{e.g.}, \cite{Drago2017}) and proceed to the explicit computation of physical effects. One of the most ambitious aims would be, for instance, to compute the Lamb Shift for hydrogenoid atoms \cite{Eides2001} from first principles, avoiding \emph{ad hoc} assumptions and rigorously controlling eventual approximations. 

\section*{Acknowledgement} It is a pleasure to thank Nicola Pinamonti for his interest in our research.

\nocite{hawking_ellis_1973}
\nocite{Hamilton2017}
\nocite{Sanders2010}
\nocite{dappiaggi09}
\nocite{Brouder2018}
\nocite{Bar2005}
\nocite{Brunetti2019}
\clearpage
\printbibliography
\end{document}